\newif\iffull
\renewcommand\footnotetextcopyrightpermission[1]{} 
	\providecommand\BibTeX{{%
			\normalfont B\kern-0.5em{\scshape i\kern-0.25em b}\kern-0.8em\TeX}}}
\author{Panagiotis Chatzigiannis}
\affiliation{%
	\institution{George Mason University}
}
\email{pchatzig@gmu.edu}
\author{Foteini Baldimtsi}
\affiliation{%
	\institution{George Mason University}
}
\email{foteini@gmu.edu}
\author{Constantinos Kolias}
\affiliation{%
	\institution{University of Idaho}
}
\email{kolias@uidaho.edu}
\author{Angelos Stavrou}
\affiliation{%
	\institution{Virginia Tech}
}
\email{angelos@vt.edu}
\newcommand{\sysname}{\textsc{BBox-IoT}\xspace}
\newcommand{\privateSeed}{k_{n}}
\newcommand{\firstPublic}{k_{0}}
\newcommand{\signgen}[2]{({#1},{#2}) \leftarrow \mathsf{SignGen}(1^{\secpar})}
\newcommand{\sign}[3]{{#3} \leftarrow \mathsf{Sign}({#1},{#2}) }
\newcommand{\svrfy}[3]{\mathsf{SVrfy}({#1},{#2},{#3}):= b}
\newcommand{\otkeygen}[4]{({#1},{#2},{#3}) \leftarrow \mathsf{OTKeyGen}(1^{\secpar},{#4}) }
\newcommand{\otsign}[6]{({#6},{#1},{#3}) \leftarrow \mathsf{OTSign}({#2},{#5},{#4}) }
\newcommand{\otverify}[3]{\mathsf{OTVerify}({#1},{#2},{#3}):= b}
\newcommand{\sysparams}{\mathsf{pp}}
\newcommand{\publickey}[2]{\mathsf{pk_{{#1}_{#2}}}}
\newcommand{\secretkey}[2]{\mathsf{sk_{{#1}_{#2}}}}
\newcommand{\nstate}[1]{\mathsf{st_{#1}}}
\newcommand{\iotGroupSet}{\mathcal{G}}
\newcommand{\iotGroup}[1]{\mathsf{G_{#1}}}
\newcommand{\aggrset}[1]{\mathcal{AG}_{#1}}
\newcommand{\aggr}[2]{\mathsf{Ag_{#1 #2}}}
\newcommand{\publicagg}[2]{\publickey{A}{#1 #2}}
\newcommand{\secretagg}[2]{\secretkey{A}{#1 #2}}
\newcommand{\tx}{\mathsf{tx}}
\newcommand{\ordererset}{\mathcal{O}}
\newcommand{\orderer}[1]{\mathsf{O}_{#1}}
\newcommand{\node}{\mathsf{Node}}
\newcommand{\sensset}[1]{\mathcal{S}_{#1}}
\newcommand{\sens}[2]{\mathsf{S_{#1 #2}}}
\newcommand{\publicsens}[2]{\publickey{S}{#1 #2}}
\newcommand{\secretsens}[2]{\secretkey{S}{#1 #2}}
\newcommand{\sensorList}[1]{\mathsf{SL_{#1}}}
\newcommand{\trustedparty}[1]{\mathsf{TP_{#1}}}
\newcommand{\groupKey}[1]{\mathsf{K_{\iotGroup{#1}}}}
\newcommand{\localadmin}[1]{\mathsf{LAdm_{#1}}}
\newcommand{\blockchain}{\mathsf{BC}}
\newcommand{\block}[1]{\mathsf{B_{#1}}}
\newcommand{\tpsetup}{\mathsf{TPSetup}}
\newcommand{\aggsetup}{\mathsf{AggrSetup}}
\newcommand{\aggAdd}{\mathsf{AggrAdd}}
\newcommand{\senssenddata}{\mathsf{SensorSendData}}
\newcommand{\aggsendtx}{\mathsf{AggrSendTx}}
\newcommand{\aggagree}{\mathsf{AggrAgree}}
\newcommand{\consensus}{\mathsf{Consensus}} 
\newcommand{\updateBC}{\mathsf{UpdateBC}}
\newcommand{\configtxgen}{\mathsf{SystemInit}}
\newcommand{\msp}{\mathsf{MSP}}
\newcommand{\publicmsp}[1]{\publickey{MSP_{#1}}{}}
\newcommand{\secretmsp}[1]{\secretkey{MSP_{#1}}{}}
\newcommand{\ordererlist}{\mathsf{OL}}
\newcommand{\peerlist}{\mathsf{PL}}
\newcommand{\ladmlist}{\mathsf{LL}}
\newcommand{\agglistladmin}{\mathsf{AL}}
\newcommand{\senslistladmin}{\mathsf{SL}}
\newcommand{\senslistagg}{\mathsf{CL}}
\newcommand{\siglistagg}{\pset{}}
\newcommand{\txlist}{\txset{}}
\newcommand{\orderertxlist}{TXL}
\newcommand{\policy}{\mathsf{Pol}}
\newcommand{\mspopers}{\mathsf{oper}}
\newcommand{\updateconfig}{\mathsf{ConfigUpdate}}
\newcommand{\updatepolicy}{\mathsf{PolicyUpdate}}
\newcommand{\config}{\mathsf{Config}}
\newcommand{\orderersetup}{\mathsf{OrdererSetup}}
\newcommand{\ordereradd}{\mathsf{OrdererAdd}}
\newcommand{\localadmsetup}{\mathsf{LAdminSetup}}
\newcommand{\localadmreg}{\mathsf{LAdminAdd}}
\newcommand{\pset}[1]{\mathsf{pset_{#1}}}
\newcommand{\txset}[1]{\mathsf{txset_{#1}}}
\newcommand{\sensjoin}{\mathsf{SensorJoin}}
\newcommand{\aggupd}{\mathsf{AggrUpd}}
\newcommand{\readconfig}{\mathsf{ReadConfig}}
\newcommand{\noderem}{\mathsf{NodeRevoke}}
\newcommand{\grprem}{\mathsf{GroupRevoke}}
\newcommand{\aggremsens}{\mathsf{AggRevokeSensor}}
\newcommand{\cmark}{\ding{51}}%
\newcommand{\xmark}{\ding{55}}%
\newcommand{\adv}{\mathcal{A}}
\newcommand{\thickhline}{%
	\noalign {\ifnum 0=`}\fi \hrule height 1pt
	\futurelet \reserved@a \@xhline
}
\newcolumntype{"}{@{\hskip\tabcolsep\vrule width 1pt\hskip\tabcolsep}}
\newcommand{\figurewidth}{0.48}
\newcommand{\pr}[1]   {\Pr\left[#1\right]}
\newcommand{\param}{\mathsf{pp}}
\newcommand{\secpar}{\lambda}
\newcommand{\Sign}{\mathsf{Sign}}
\def\comments_on{0} 
\newcommand{\foteini}[1]{{
		{\color{red}\textbf{Foteini:}
			\emph{#1}
		}\normalcolor}}
\newcommand{\panagiotis}[1]{{
		{\color{blue}\textbf{Panagiotis:}
			#1
		}\normalcolor}}
\newcommand{\bingy}[1]{{
		{\color{magenta}\textbf{Kostas:}
			\emph{#1}
		}\normalcolor}}
\newcommand{\authnote}[1]{#1} 
\newcommand{\bingsheng}[1]{}
\newcommand{\foteini}[1]{}
\newcommand{\aggelos}[1]{}
\newcommand{\panagiotis}[1]{}
\newcommand{\bingy}[1]{}
\newcommand{\authnote}[1]{}
\newtheorem{defn}{Definition}
\newtheorem{thm}{Theorem}
\newtheorem{corr}{Corollary}
\begin{document}
\title{Black-Box IoT: Authentication and Distributed Storage of IoT Data from Constrained Sensors}

\begin{abstract}

We propose Black-Box IoT (\sysname), a new ultra-lightweight black-box system for authenticating and storing IoT data.  
\sysname is tailored for deployment on IoT devices (including low-Size Weight and Power sensors) which are \emph{extremely constrained} in terms of computation, storage, and power. By utilizing core Blockchain principles, we ensure that the collected data is immutable and tamper-proof while preserving data provenance and non-repudiation. 
To realize \sysname, we designed and implemented a novel chain-based hash signature scheme which only requires hashing operations and removes all synchronicity dependencies between signer and verifier. Our approach enables low-SWaP devices to authenticate removing reliance on clock synchronization. Our evaluation results show that \sysname is practical in Industrial Internet of Things (IIoT) environments: even devices equipped with 16MHz micro-controllers and 2KB memory can broadcast their collected data without requiring heavy cryptographic operations or synchronicity assumptions. Finally,  when compared to industry standard ECDSA, our approach is two and three orders of magnitude faster for signing and verification operations respectively. Thus, we are able to increase the total number of signing operations by more than 5000\% for the same amount of power.
\end{abstract}



\maketitle
\iffull
\pagestyle{plain}
\fi

\section{Introduction}

The commercial success of low Size Weight and Power (SWaP) sensors and
IoT devices has given rise to new sensor-centric applications
transcending traditional industrial and closed-loop
systems~\cite{zou2018towards,derhamy2015survey}.  In their most recent
Annual Internet Report~\cite{cisco2020}, CISCO estimates that there
will be 30 billion networked devices by 2023, which is more than three
times the global population.  While very different in terms of their
hardware and software implementations, Industrial IoT (IIoT) systems
share common functional requirements: they are designed to collect
data from a large number of low-SWaP sensor nodes deployed at the
edge. These nodes, which we refer to as edge \emph{sensors}, are
resource-constrained devices used in volume to achieve a broader
sensing coverage while maintaining low cost. Thus, while capable of
performing simple operations, low-SWaP sensors usually depend on
battery power, are equipped with limited storage, and have low
processing speed~\cite{8534563}.

In practice, edge sensors are usually controlled by and report to more
powerful gateway devices (which we refer to as \emph{aggregators})
that process and aggregate the raw sensory data. For instance, in an
Industrial (IIoT) environment, sensors are devices such as temperature
sensors are broadcasting their measurements to the network router,
which in turn submits it to the cloud through the Internet. Until
recently, due to processing and storage constraints, many IoT designs
were geared towards direct to cloud aggregation and data
processing. However, latency, bandwidth, autonomy and data privacy
requirements for IoT applications keep pushing the aggregation and
processing of data towards the edge \cite{lin2019computation}. In
addition, in most use cases, IoT devices need to be mutually
\emph{authenticated} to maintain system integrity and the data origin
has to be verified to prevent data pollution
attacks~\cite{DBLP:journals/sj/LiuXG14,5638628} and in ``model
poisoning'' where an attacker has compromised a number of nodes acting
cooperatively, aiming to reduce the accuracy or even inject backdoors  to the resulting analysis
models~\cite{DBLP:conf/icml/BhagojiCMC19,gu2019badnets}.

The use of distributed, immutable ledgers has been proposed as a
prominent solution in the IoT setting allowing rapid detection of
inconsistencies in sensory data and network communications, providing
a conflict resolution mechanism without relying on a trusted
authority~\cite{DBLP:conf/icse/BellLBS17}.  A number of relevant schemes has been proposed in the
literature~\cite{DBLP:conf/lcn/ProfentzasAL19,Shafagh:2017:TBA:3140649.3140656},
which \iffull as we discuss in Section~\ref{relwork}, \fi propose
various ways to integrate distributed ledgers (commonly referred to as
\emph{Blockchain}) with IoT.

\noindent \textbf{The Challenge:} One of the main roadblocks for using
Blockchain-based systems as ``decentralized'' databases for sharing
and storing collected data is their dependency on asymmetric
authentication techniques. Typically, to produce authenticated data
packets, sensors have to digitally sign the data by performing public
key cryptographic operations, which are associated with expensive sign
and verification computations and large bandwidth
requirements. Although some high-end consumer sensor gateways and
integrated sensors might be capable of performing cryptographic
operations, a large number of edge sensors have limited computational
power, storage and energy~\cite{boyer2017distributed,KARRAY201889}.
To make matters worse, sensors try to optimize their power consumption
by entering a ``sleep'' state to save power resulting in intermittent
network connectivity and lack of synchronicity. Given such tight
constraints, an important challenge is allowing low-SWaP devices being
extremely constrained both in terms of computational power and memory
(categorized as Class 0 in RFC 7228~\cite{RFC7228}
ref. Section~\ref{measurements-scenario}), to authenticate and utilize
a blockchain-based data sharing infrastructure.

\noindent \textbf{Our Contributions:} We design and implement
\sysname, a complete blockchain-based system for Industrial IoT
devices aimed to create a decentralized, immutable ledger of sensing
data and operations while addressing the sensor and data
authentication challenge for extremely constrained devices. We aim to
use our system as a "black-box" that empowers operators of an
IIoT enclave to audit sensing data and operational information such as
IIoT communications across all IIoT devices.

To perform sensor and data authentication operations \emph{without}
relying on heavy cryptographic primitives, we introduce a novel
hash-based digital signature that uses an onetime hash chain of
signing keys. While our design is inspired by TESLA broadcast
authentication protocol~\cite{848446,Perrig02thetesla}, our approach
\emph{does not} require any timing and synchronicity assumptions
between signer and verifier. Overcoming the synchronicity
requirement is critical for low-SWaP devices since their internal
clocks often drift out of synchronization (especially those using low
cost computing
parts)~\cite{DBLP:journals/sensors/Tirado-AndresRA19,DBLP:journals/tii/ElstsFDOPC18}.
\iffull Our signature construction is proven secure assuming a
pre-image resistant hash function. Also, we achieved logarithmic
storage and computational (signature/verification) costs using
optimizations~\cite{jakobsson2002fractal,RSA:YSEL09}.  \fi Our
proposed scheme further benefits by the broadcast nature of the
wireless communication. Indeed, in combination with the immutable
blockchain ledger, we are able to ferret out man-in-the-middle attacks
in all scenarios where we have more than one aggregators in the
vicinity of the sensors. To bootstrap the authentication of sensor
keys, we assume an operator-initiated device bootstrap protocol that
can include either physical contact or wireless pairing using an
operator-verified ephemeral code between sensors and their receiving
aggregators. Our bootstrap assumptions are natural in the IoT setting,
where sensors often ``report" to specific aggregators and allows us to
overcome the requirement for a centralized PKI.  Note that our
signature scheme is of independent interest, in-line with recent
efforts by NIST for lightweight cryptography~\cite{turan2019status}.

For the blockchain implementation where a \emph{consensus} protocol is
needed, we consider a \emph{permissioned} setting, where a trusted
party authorizes system participation at the aggregator level.  Our
system supports two main types of IoT devices: low-SWaP sensors who
just broadcast data and self-reliant aggregators who collect the data
and serve as gateways between sensors and the blockchain.  While our
system is initialized by a trusted operator, the operator is not
always assumed present for data sharing and is only required for
high-level administrative operations including adding or removing
sensors from the enclave. We build the consensus algorithms for
\sysname using a modified version of Hyperledger Fabric
\cite{DBLP:journals/corr/abs-1801-10228}, a well known permissioned
blockchain framework, and leverage blockchain properties for
constructing our protocols tailored for constrained-device
authentication. However, \sysname operations are designed to be
lightweight and do not use public key cryptography based on the RSA or
discrete logarithm assumptions, which are common, basic building
blocks of popular blockchain implementations. We describe our system
in details considering interactions between all participants and argue
about its security.

We implemented and tested a \sysname prototype in an IIoT setting
comprising of extremely constrained sensors (Class 0 per RFC 7228). We
employed 8-bit sensor nodes with 16MHz micro controllers and 2KB RAM,
broadcast data every 10 seconds to a subset of aggregators (e.g. IIoT
gateways) which in turn submit aggregated data to a cloud
infrastructure. The evaluation shows that the IIoT sensors can compute
our 64-byte signature in 50ms, making our signature scheme practical
even for the least capable of IIoT platforms. Our evaluation section
shows results by considering a sensor/gateway ratio of 10:1. When
compared with ECDSA signing operations, our scheme is significantly
more efficient offering two (2) and three (3) orders of magnitude
speedup for signing and verification respectively. Our theoretical
analysis and implementation shows that we can achieve strong chained
signatures with half signature size, which permits accommodating more
operations in the same blockchain environment. \sysname is also over
50 times more energy-efficient, which makes our system ideal for edge
cost-efficient but energy-constrained IIoT devices and applications.

Finally, we adopt the same evaluation for Hyperledger Fabric considered in previous work~\cite{DBLP:journals/corr/abs-1801-10228} and estimate the end-to-end costs of \sysname when running on top of our Hyperledger modification, showing it is deployable in our considered use-cases.

\section{Background \& Preliminaries}
\label{backgrnd-prelims}
\subsection{Blockchain System Consensus}
\label{prel:consensus}
\iffull In distributed ledgers including Blockchains, we can
categorize the participants to: a) Blockchain maintainers also called
\emph{miners}, who are collectively responsible for continuously
appending valid data to the ledger, and b) clients, who are reading
the blockchain and posting proposals for new data. While clients are
only utilizing the blockchain in a read-only mode, the blockchain
maintainers who are responsible for ``book-keeping" must always act
according to a majority's agreement to prevent faulty (offline) or
Byzantine (malicious) behavior from affecting its normal
functionality. Reaching agreement requires the existence of a
\emph{consensus} protocol among these maintainers.

Moreover, the type of consensus protocol can be permissioned or
permissionless. In permissionless blockchains, like
Bitcoin~\cite{Nakamoto:bitcoin} and
Ethereum~\cite{buterin2014ethereum}, anyone can participate in the
consensus protocol and sybil attacks are prevented by new consensus
mechanisms such as Proof-of-Work or
Proof-of-Stake~\cite{C:KRDO17}. Although the open nature of
permissionless blockchains seems attractive, it does not really fit
the membership and access control requirements for IoT
deployments. Instead, operators would like to exert control over IoT
sensors and aggregators by authenticating all system participants.

Focusing on permissioned blockchains, consensus is computationally
easier to achieve even if nodes are of limited
capabilities. Distributed consensus mechanisms such as Practical
Byzantine Fault Tolerance (PBFT)~\cite{Castro:1999:PBF:296806.296824}
are relatively efficient, as long as the number of maintainers is
relatively small: Byzantine Fault Tolerant consensus is considered
scalable up to 20
nodes~\cite{DBLP:conf/ifip11-4/Vukolic15,DBLP:conf/middleware/ChondrosKR12}.
The distributed consesus mechanisms use inexpensive cryptographic
operations such as message authentication codes (MACs). Alternatively,
a trusted party $\trustedparty{}$ decides who should be given
participating access by issuing a private credential to each node.

In the following paragraphs we define consensus in the permissioned
setting and establish the properties needed for our system.
In Appendix \ref{apdx:consensus} we present an overview of  consensus algorithms in the permissioned setting 
inspired by
\cite{DBLP:journals/corr/abs-1711-03936,DBLP:journals/corr/CachinV17,RSA:GarKia20},
and discuss how each one would fit to \sysname.

\textbf{Fundamental consensus properties:} 
\label{prel:consensus:fundam-properties} Informally, the
consensus problem considers a set of state machines, we also refer to
them as \emph{replicas} or \emph{parties}, starting with the same
initial state. Then given a common sequence of messages, each correct
replica, by performing its private computation, should reach a state
consistent with other correct replicas in the system, despite any
possible network outages or other replica
failures~\cite{DBLP:journals/corr/abs-1711-03936}.

Since our system uses a blockchain, we focus on the notion of
\emph{ledger} consensus~\cite{RSA:GarKia20}, where a number of parties
receive a common sequence of messages and append their outputs on a
public ledger. As a special case, an \emph{Authenticated} ledger
consensus
protocol~\cite{Lamport:1982:BGP:357172.357176}\cite{EPRINT:LinLysRab04}
only permits participation through credentials issued by a
$\trustedparty{}$. The two fundamental properties of a ledger
consensus protocol are \cite{RSA:GarKia20}: (a) \emph{Consistency}: An
honest node's view of the ledger on some round $j$ is a prefix of an
honest node's view of the ledger on some round $j+\ell, \ell>0$.  (b)
\emph{Liveness}: An honest party on input of a value $x$, after a
certain number of rounds will output a view of the ledger that
includes $x$. We provide formal definitions in Appendix
\ref{apdx:consensus}.

\else In distributed ledgers (or Blockchains), we can categorize the
participants as follows: a) Blockchain maintainers (called also
\emph{miners}), who are collectively responsible for continuously
appending valid data to the ledger, and b) clients, who are reading
the blockchain and posting proposals for new data. While clients are
only utilizing the blockchain in a read-only mode, the blockchain
maintainers who are responsible for ``book-keeping" must always act
according to a majority's agreement to prevent faulty (offline) or
Byzantine (malicious) behavior from affecting its normal
functionality. This assumes that a \emph{consensus}
protocol takes place behind the scenes among these maintainers, which are distinguished to
permissioned or permissionless, according to their participation
controls.

Although the open nature of permissionless blockchains seems
attractive, it does not really fit the membership and access control
requirements for IoT deployments. In such settings, operators prefer to control
the participation of IoT sensors and aggregators by means of
authenticating them. Moreover, in permissioned settings consensus is computationally cheaper
and thus better suited to nodes with limited capabilities.

\textbf{Fundamental Consensus Properties:} 
\label{prel:consensus:fundam-properties} Informally, the ledger consensus problem \cite{RSA:GarKia20} considers a number of parties receiving a common sequence of messages, appending their outputs on a public ledger. The  two basic  properties of a ledger consensus protocol are:
(a) \emph{Consistency}: An honest node's view of the ledger on some round $j$ is a prefix of an honest node's view of the ledger on some round $j+\ell, \ell>0$.  
(b) \emph{Liveness}: An honest party  on input of value $x$, after a certain number of rounds outputs a ledger view that includes $x$. 
\fi

\textbf{\sysname Permissioned Consensus:}
The aforementioned fundamental properties are not sufficient for our
system consensus. For instance, most ``classical" consensus algorithms
such as PBFT~\cite{Castro:1999:PBF:296806.296824} have not been widely
deployed due to various practical issues including lack of
scalability. Taking the \sysname requirements into account, the system's
consensus algorithm needs to satisfy the following additional
properties:

\begin{enumerate}[leftmargin=*,label=\roman*.]
	\item Dynamic membership: In  \sysname, there is no a priori knowledge of system participants. New members might want to join (or leave) after bootstrapping the system. We  highlight that the vast majority of permissioned consensus protocols assume a static membership \iffull\cite{DBLP:conf/middleware/ChondrosKR12}\fi. Decoupling ``transaction signing participants" from ``consensus participants" is a paradigm that circumvents this limitation \iffull\cite{DBLP:journals/tdsc/RodriguesLCLS12}\fi.
	\item Scalable: \sysname might be deployed in wide-area scenarios (e.g. IIoT), so the whole system must support in practice many thousands of participants, and process many operations per second (more than 1000 op/s)\iffull~\cite{DBLP:conf/ifip11-4/Vukolic15}\fi. 
	\item DoS resistant: For the same reason above, participants involved in consensus should be resilient to denial-of-service attacks \iffull\cite{DBLP:journals/corr/abs-1711-03936}\fi.
\end{enumerate}
\iffull
In Appendix \ref{apdx:consensus-properties} we also mention some desirable (but not required) consensus properties, and show how some well-known consensus protocols satisfy our required properties above in Table \ref{prelims:consensus-comparison}.
\fi

\begin{figure*}
	\subfloat[Original architecture. Clients collect signatures from peers for a transaction, then submit the signed transaction to the ordering service which then returns a block containing packaged transactions to peers.]{
		\includegraphics[width=0.45\textwidth]{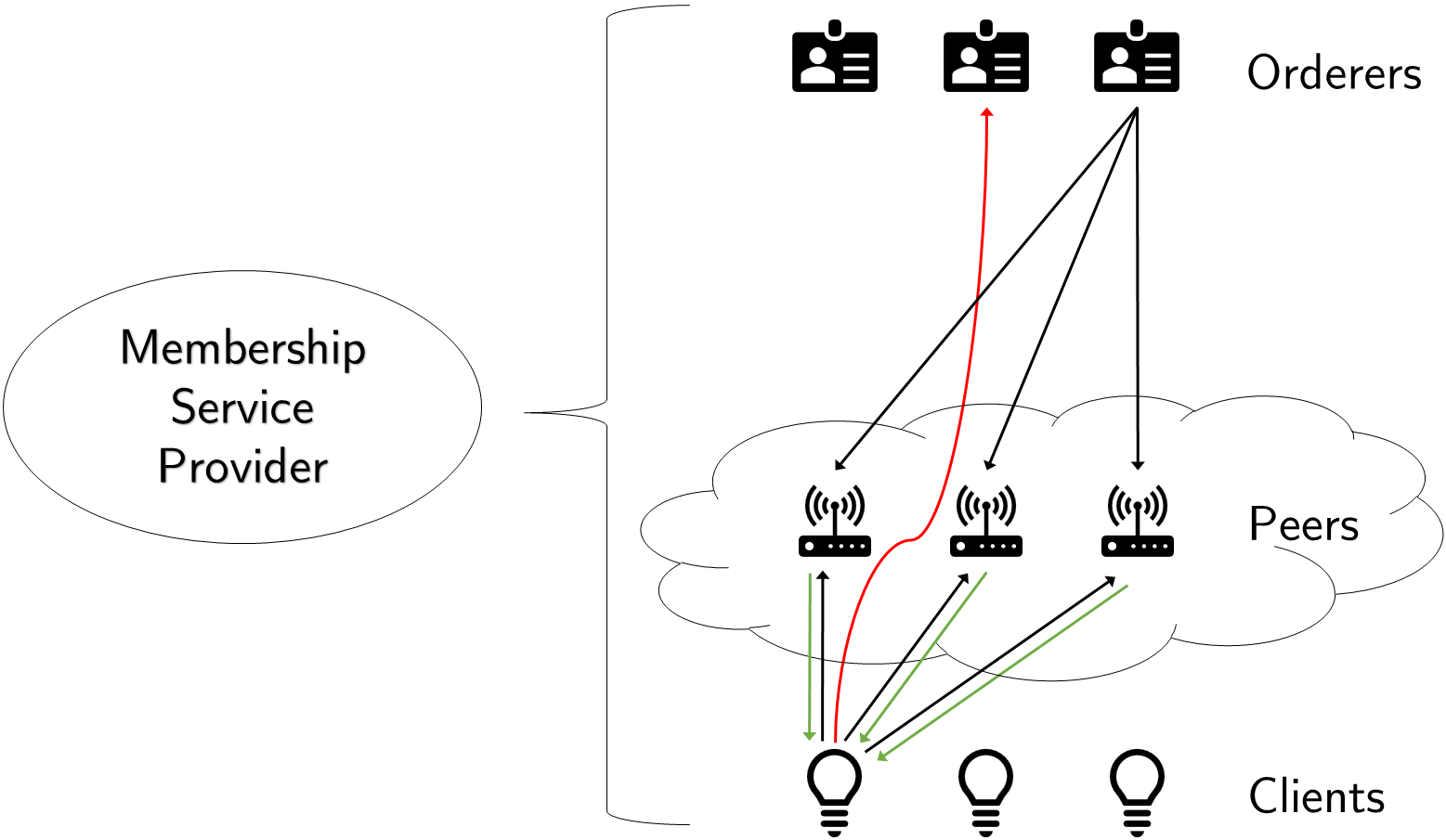} \label{HLF-fig}
	}
    \qquad
	\subfloat[Modified architecture. Clients only broadcast the transaction to the peers, who are then responsible themselves for signing it before submitting it to the ordering service.]{
		\includegraphics[width=0.45\textwidth]{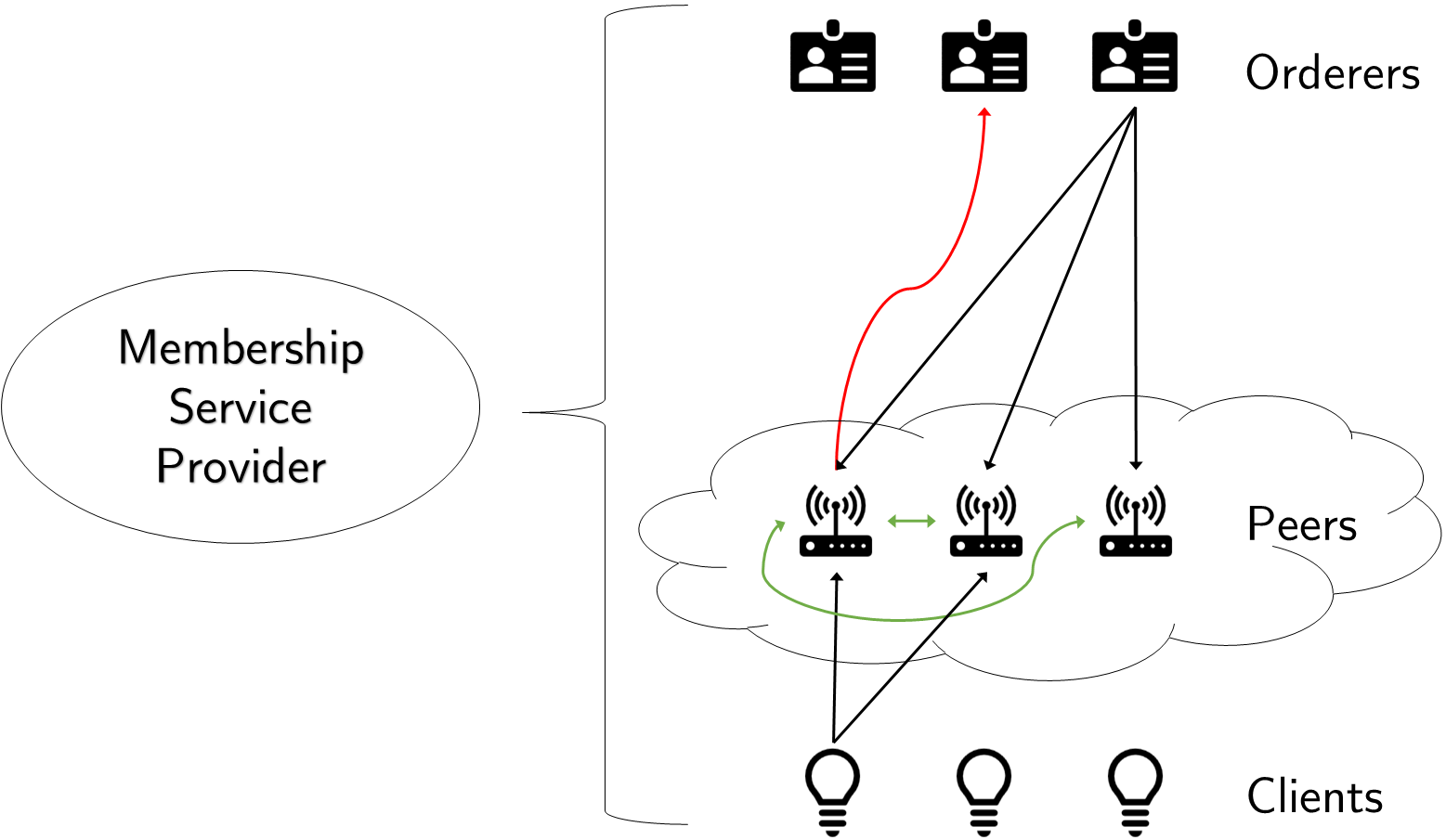} \label{HLF-fig-mod}
	}
	\caption{Modified Hyperledger Fabric architecture.}
	\iffull
	\else
	\vspace{-0.15in}
	\fi
\end{figure*}

\iffull
\subsection{Hyperledger} 

One of the most promising examples of permissioned blockchains is the
Hyperledger project, established by the Linux Foundation and actively
supported by companies such as IBM and
Intel \cite{Hyperledger-architecture-vol1}. The Hyperledger project
aims to satisfy a wide range of business blockchain requirements, and
has developed several frameworks with different implementation
approaches, such as Hyperledger Fabric, Indy, Iroha and Sawtooth. Each
framework uses a different consensus algorithm, or even supports
``pluggable" (rather than hardcoded) consensus like Hyperledger
Fabric~\cite{DBLP:journals/corr/abs-1801-10228}.
To make pluggable consensus possible, Hyperledger Fabric introduces
the ``execute-order-validate" paradigm in its architecture, instead of
the traditional
``order-execute"~\cite{DBLP:journals/corr/abs-1801-10228}. In this
paradigm, the ``maintainer" participants are decoupled from the
``consensus" participants (called \emph{Peers} and \emph{Orderers}
respectively as we will see below), which eventually leads to
satisfying dynamic membership and scalability.

In the following we focus on Hyperledger Fabric which seems to fit
best in our system and provide a high-level description of its
architecture (shown in Figure \ref{HLF-fig}).
Its main components are categorized as follows:
\begin{enumerate}
	\item \textbf{Clients} are responsible for creating a transaction and submitting it to the peers for signing. After collecting a sufficient number of signatures (as defined by the system policy), they submit their transaction to the orderers for including it in a block. Client authentication is delegated to the application.
	\item \textbf{Peers} are the blockchain maintainers, and are also responsible for endorsing clients' transactions. Notice that in the context of Hyperledger, ``Endorsing'' corresponds to the process of applying message authentication.
	\item \textbf{Orderers} collectively form the ``ordering service" for the system. After receiving signed transactions from the clients, the service establishes \emph{consensus} on total order of a collected transaction set. Then the ordering service by delivering blocks to the peers, and ensures the consistency and liveness properties of the system.
	\item The \textbf{Membership Service Provider} ($\msp$) is responsible for granting participation privileges in the system.
\end{enumerate}

In its initial version v0.6, Hyperledger Fabric used the Byzantine fault-tolerant PBFT consensus algorithm~\cite{Hyperledger-fabric-consensus-v0.6}, which only supports static membership for the ordering service participants. It's current version (v2.2)  offers the \textit{Raft}~\cite{DBLP:conf/usenix/OngaroO14} consensus algorithm, which provide crash fault tolerance but not Byzantine fault tolerance, thus preventing the system from reaching consensus in the case of malicious nodes. Hyperledger Fabric could potentially use BFT-SMART in the future~\cite{DBLP:conf/dsn/BessaniSA14,DBLP:conf/dsn/SousaBV18}.

Regarding scalability, although the original version of Hyperledger Fabric has several potential bottlenecks in its architecture, proposals exist to improve its overall performance in terms of operations per second~\cite{DBLP:journals/corr/abs-1901-00910}. These proposals suggest storing blocks in a distributed peer cluster for further scalability improvements. Also while operations (or transactions) per second is one scalability aspect in our setting, the other one is the actual number of peers the system can practically support without heavily impacting its performance. To the best of our knowledge, this has only been experimentally shown for up to 100 peers~\cite{DBLP:journals/corr/abs-1801-10228}. This experiment indicates however a low impact of the number of peers, especially if the network latency is low. Our setting allows such an assumption, since our use-case scenarios are all instantiated in a specific geographical location, as later discussed in section \ref{measurements}. 

\else
\subsection{Modifying Hyperledger Fabric}
\label{mod:hyperledger}

Hyperledger \cite{Hyperledger-architecture-vol1}\iffull \else, a well-known open-source blockchain platform in the permissioned model, \fi satisfies a wide range of business blockchain requirements, and has developed several frameworks, supporting different consensus algorithms or even ``pluggable" (rather than hardcoded) consensus like Hyperledger Fabric~\cite{DBLP:journals/corr/abs-1801-10228}. 
Its main components are categorized as follows:
\begin{enumerate}[leftmargin=*]
	\item \textbf{Clients} are responsible for creating a transaction and submitting it to the peers for signing. After collecting a sufficient number of signatures (as defined by the system policy), they submit their transaction to the orderers for including it in a block. Client authentication is delegated to the application.
	\item \textbf{Peers} are the blockchain maintainers, and are also responsible for endorsing clients' transactions. Notice that in the context of Hyperledger, ``Endorsing'' corresponds to the process of applying message authentication.
	\item \textbf{Orderers} after receiving signed transactions from the clients, establish \emph{consensus} on total order of a collected transaction set, deliver blocks to the peers, and ensure the consistency and liveness properties of the system.
	\item The \textbf{Membership Service Provider} ($\msp$) is responsible for granting participation privileges in the system.
\end{enumerate}

\fi

\iffull
\subsection{Modifying Hyperledger Fabric}
\label{mod:hyperledger}

While Hyperledger Fabric's  \emph{execute-order-validate} architecture offers several advantages discussed previously, we cannot directly use it in our \sysname system, since we assume that lightweight devices (which for Hyperledger Fabric would have the role of ``clients") are limited to only broadcasting data without being capable of receiving and processing. 
\else
Directly using Hyperledger in \sysname is not possible, since we assume that lightweight devices (which for Hyperledger Fabric would have the role of ``clients") are limited to only broadcasting data without being capable of receiving and processing. 
\fi
In Hyperledger Fabric, clients need to collect signed transactions and send them to the ordering service, which is an operation that lightweight devices are typically not capable of performing. 

\noindent \textbf{Our modification.} To address this issue, we propose a modification in Hyperledger architecture. In our modified version, as shown in Figure \ref{HLF-fig-mod}, a client broadcasts its ``transaction" message to all nearby peer nodes. However, the transaction is handled by a \emph{specific} peer (which are equivalent to an aggregator as we discuss in the next section), while peers not ``responsible" for that transaction disregard it. That specific peer then assumes the role of the ``client" in the original Hyperledger architecture simultaneously, while also continuing functioning as a peer node. As a client, it would be responsible for forwarding this transaction to other peers, and collecting the respective signatures, as dictated by the specified system policy, in a similar fashion to original Hyperledger Fabric. 
It would then forward the signed transaction to the ordering service, and wait for it to be included in a block. The ordering service would send the newly constructed block to all peers, which would then append it to the blockchain. 

\noindent \textbf{Security of our modifications.}
The proposed modifications of Hyperledger do not affect the
established security properties (i.e. Consistency and Liveness\iffull
as we define them in Appendix \ref{apdx:consensus} and interpreted
in \cite{DBLP:journals/corr/abs-1801-10228}\fi), since a peer node
simultaneously acting as a client, can only affect the signing process
by including a self-signature in addition to other peers'
signatures. However, because the signing requirements are dynamically
dictated by the system policy, these could be easily changed to
require additional signatures or even disallow self-signatures to
prevent any degradation in security. 
We also note that while this modified version of Hyperledger effectively becomes agnostic to the original client, which otherwise has no guarantees that its broadcasted transaction will be processed honestly, our threat model discussed in the next section captures the above trust model.

\section{\sysname System properties}
\label{properties}
In \sysname there are five main types of participants, most of them inherited by Hyperledger Fabric: the $\msp$, orderers, local administrators, aggregators and sensors. Aggregators are equivalent to \emph{peers} and sensors to \emph{clients} in our modified Hyperledger Fabric architecture discussed in the previous section. We provide a high level description of each participant's role in the system and include detailed definitions in \iffull Appendix \ref{sec-model-defs} \else the full version of our paper~\cite{fullversion}\fi.
\begin{itemize}[leftmargin=*]
	\item The $\msp$ is a trusted entity who grants or revokes authorization for orderers, local administrators and aggregators to participate in the system, based on their credentials. It also initializes the blockchain and the system parameters and manages the system configuration and policy.
	\item Orderers (denoted by $\orderer{}$) receive signed transactions from aggregators. After verifying the transactions as dictated by the system policy they package them into blocks. An orderer who has formed a block invokes the consensus algorithm which runs among the set of orderers $\ordererset$. On successful completion, it is transmitted back to the aggregators with the appropriate signatures.
	\item Local administrators (denoted by $\localadmin{}$, are lower-level system managers with delegated authority from the $\msp$. Each $\localadmin{}$ is responsible for creating and managing a local device group $\iotGroup{}$, which includes one or more aggregators and sensors. He grants authorization for aggregators to participate in the system with the permission of the $\msp$. He is also solely responsible for granting or revoking authorization for sensors in his group, using aggregators to store their credentials.
	\item Aggregators (denoted by $\aggr{}{}$) are the blockchain maintainers. They receive blocks from orderers and each of them keeps a copy of the blockchain. They store the credentials of sensors  belonging in their group and they pick up data broadcasted by sensors. Then they create blockchain ``transactions" based on their data (after possible aggregation), and periodically collect signatures for these transactions from other aggregators in the system, as dictated by the system policy. Finally, they send signed transactions to the ordering service, and listen for new blocks to be added to the blockchain from the orderers. 
	\item Sensors (denoted by $\sens{}{}$) are resource-constrained devices. They periodically broadcast signed data blindly without waiting for any acknowledgment. They interact with local administrators during their initialization, while their broadcasted data can potentially be  received and authenticated by multiple aggregators.
\end{itemize}

We then define the security 
and operational 
properties of  \sysname, in accordance with evaluation principles adopted in  \cite{coindesk2018,DBLP:journals/corr/DorriKJ16,10.1007/978-3-319-28472-9-9,Shafagh:2017:TBA:3140649.3140656}.

\subsection{Threat model \& Assumptions}
\label{threatmdl}
\noindent \textbf{Physical layer attacks and assumptions.}  While our
system cannot prevent physical tampering with sensors that might
affect data correctness, any data discrepancies can be quickly
detected through comparisons with adjacent sensors given the
blockchain immutability
guarantees\iffull~\cite{EPRINT:WusGer17}\fi. Similarly, any malicious
or erroneous data manipulation by an aggregator will result in
detectable discrepancies even when one of the aggregators is not
compromised simultaneously. Of course, if all aggregators become
compromised instantaneously, which is hard in a practical setting, our
system will not detect any discrepancies. This raises the bar
significantly for an adversary who might not be aware or even gain
access to all aggregator nodes at the same time. Finally, attacks such as flooding/jamming and broadcast interception attacks
are out of scope in this paper. 

\noindent \textbf{Trust Assumptions.} We assume that $\msp$ is honest
during system bootstrapping only, and that device group participants
(Local administrators, aggregators and sensors) may behave unreliably
and deviate from protocols. For instance, they might attempt to
statically or dynamically interfere with operations of honest system
participants (e.g. intercept/inject own messages in the respective
protocols), even colluding with each other to do so. This behavior is
expected which our system is designed to detect and thwart.

\noindent \textbf{Consensus Assumptions.}  As in Hyperledger, we
decouple the security properties of our system from the consensus
ones. For reference, this implies tolerance for up to 1/3 Byzantine
orderer nodes, with a consensus algorithm satisfying at least the
fundamental and additionally required properties discussed in Section
\ref{backgrnd-prelims}.

Given the above adversarial setting, we define the following security properties\iffull\footnote{We do not consider data confidentiality in our system, however as discussed later our model could be extended to satisfy confidentiality as well.}\fi:

\newlist{UR}{enumerate}{1}
\setlist[UR]{label=S-\arabic*,ref=S-\arabic*}

\begin{UR}
	\item \label{partic-auth}
	Only authenticated participants can participate in the system. Specifically:
	\begin{enumerate}[label=\alph*.,ref=\ref{partic-auth}\alph*.]
		\item \label{partic-auth-a} An  orderer non-authenticated by the  $\msp$ is not able to construct blocks (i.e., successfully participate in the consensus protocol). The ordering service can tolerate up to $f$ malicious (byzantine) orderers.
		\item \label{partic-auth-b} An  $\localadmin{}$ non-authenticated by the $\msp$ is not able to form a device group $\iotGroup{}$.
		\item \label{partic-auth-c} If an aggregator is not authenticated by the $\msp$, then its signatures on transactions cannot be accepted or signed by other aggregators. 
	\end{enumerate}
	\item \label{sensor-health} \textit{Sensor health:} 
	Sensors are resilient in the following types of attacks:
	\begin{enumerate}[label=\alph*.,ref=\ref{partic-auth}\alph*.]
		\item Cloning attacks: A non-authenticated sensor cannot impersonate an existing sensor and perform operations that will be accepted by aggregators.
		\item Message injection - MITM attack: A malicious adversary cannot inject or modify data broadcasted by sensors.
	\end{enumerate}
	
	\item \label{partic-malic} \textit{Device group safety:} 
	Authenticated participants in one group cannot  tamper with other groups in any way, i.e.:
	\begin{enumerate}[label=\alph*.]
		\item An $\localadmin{}$ cannot manage another  group, i.e. add or revoke participation of an aggregator or sensor in another device group, or interfere with the functionalities of existing aggregators or sensors at any time.
		\item An aggregator (or a coalition of aggregators) cannot add or remove any sensor in device group outside of their scope, or interfere with the functionalities of existing aggregators or sensors at any time.
		\item A sensor (or a coalition of sensors) cannot interfere with the functionalities of existing aggregators or other sensors at any time.
	\end{enumerate}
 
	\item \label{sec-nonrepud} \textit{Non-repudiation and data provenance:} Any \sysname node cannot deny sent data they signed. For all data stored in \sysname , the source must be identifiable.
	\item \label{sec-dos-resil} \textit{DoS resilient:} \sysname  continues to function even if $\msp$ is offline and not available, or an adversary prevents communication up to a number of orderers (as dictated by the consensus algorithm), a number of aggregators (as dictated by the system policy) and up to all but one sensor. Also an adversary is not able to deny service to any system node (except through physical layer attacks discussed before).
	\item \label{sec-pol-conf} \textit{System policy and configuration security:} \sysname policy and configuration can only be changed by $\msp$.
	\item \label{sec-revoc} \textit{Revocation:} The system is able to revoke authentication for any system participant, and a system participant can have its credentials revoked only by  designated system participants.
\end{UR}

\section{Constructions}

\label{constr}
\label{mitm}

We first set the notation we will be using throughout the rest of the paper. By $\secpar$ we denote the security parameter. By $b \leftarrow B(a)$ we denote a probabilistic polynomial-time (PPT) algorithm $B$ with input $a$ and output $b$. By $:=$ we denote  deterministic computation and by $a \rightarrow b$ we denote assignment of value $a$ to value $b$. \iffull We denote a protocol between two parties $A$ and $B$ with inputs $x$ and $y$ respectively as $\{\mathsf{A}(x) \leftrightarrow \mathsf{B}(y)\}$. \fi
By $(\publickey{}{},\secretkey{}{})$ we denote a public-private key pair. 
\iffull A list of elements is denoted by $ [~]$. We denote a block $\block{}$ being appended on a blockchain $\blockchain$ as $\blockchain ||\block{}$.\else We denote concatenation as $||$.\fi

\subsection{Our Hash-based Signature Scheme}
\label{our-primitive}

\iffull Our construction is inspired by Lamport passwords~\cite{Lamport:onetime} and TESLA~\cite{848446,Perrig02thetesla} but \else Our construction is a digital signature scheme that only requires hashing as the main operation. While  inspired by the Lamport passwords~\cite{Lamport:onetime} and TESLA~\cite{848446,Perrig02thetesla}, it \fi \emph{avoids the need for any synchronization} between senders and receivers which is a strong assumption for the IoT setting. 
Instead, we assume the existence of a constant-sized state for both the sender and receiver between signing operations. Our scheme allows for a fixed number of messages to be signed, and has constant communication and logarithmic computation and storage costs 
under the following requirements and assumptions:
\begin{itemize}[leftmargin=*]
	\item There's \emph{no} requirement for time synchronization, and a verifier should only need to know the original signer's $\publickey{}{}$.
	\item The verifier should immediately be able to verify the authenticity of the signature (i.e. without a ``key disclosure delay" that is required in the TESLA family \iffull protocols, described in more detail in Section \ref{prel:onetime-sigs} ). \else protocols. \fi
	\item Network outages, interruptions or ``sleep'' periods can be resolved by requiring computational work from the verifier,  proportional to the length of the outage.
	\item We do not protect against Man-in-the-Middle attacks in the signature level, instead, we use the underlying blockchain to detect and mitigate such attacks as we discuss later in Section \ref{constr:sec-analysis}.
	\item The signer has very limited computation, power and storage capabilities, but can outsource a computationally-intensive pre-computation phase to a powerful system.
\end{itemize}

\renewcommand{\figurename}{Construction}
\setcounter{figure}{0}
\begin{figure}
	
	\fbox{\begin{minipage}{0.96\linewidth}

			\begingroup
			\fontsize{8pt}{12pt}\selectfont
			
			Let $h: \{0,1\}^{*} \rightarrow \{0,1\}^{\secpar}$ be a preimage resistant hash function.
			\begin{itemize}[leftmargin=*]
				\item[] $\otkeygen{\publickey{}{}}{\secretkey{n}{}}{s_{0}}{n}$ 
				\begin{itemize}
					\item sample a random ``private seed" $\privateSeed \leftarrow\{0,1\}^{*}$
					\item generate hash chain $\publickey{}{} = \firstPublic= h(k_{1})= h(h(k_{2})) = ... = h^{i}(k_{i}) = h^{i+1}(k_{i+1}) = ...= h^{n-1}(k_{n-1}) = h^{n}(\privateSeed)$
					\item hash chain creates $n$ pairs of $(\publickey{i}{},\secretkey{i}{})$ where:
					\\ $(\publickey{1}{},\secretkey{1}{}) = (\firstPublic,k_{1}) = (h(k_{1}),k_{1})$, 
					\\ $(\publickey{2}{},\secretkey{2}{}) = (k_{1},k_{2}) = (h(k_{2}),k_{2})$, 
					\\ ... , 
					\\ $(\publickey{i}{},\secretkey{i}{}) = (k_{i-1},k_{i}) = (h(k_{i}),k_{i})$, 
					\\ ..., 
					\\ $(\publickey{n}{},\secretkey{n}{}) = (k_{n-1},\privateSeed) = (h(\privateSeed),\privateSeed)$
					\item initialize a counter $\mathsf{ctr}=0$, store $\mathsf{ctr}$ and  pairs as $[(\publickey{i}{},\secretkey{i}{})]_{1}^{n}$ to initial state $s_{0}$
					\item output $(\publickey{}{}=\publickey{1}{},\secretkey{n}{},s_{0})$.
				\end{itemize} 
				\textbf{Note:} Choosing to store only $(\publickey{}{},\secretkey{n}{})$ instead of the full key lists introduces a storage-computation trade-off, which can be amortized by the ``pebbling" technique we discuss in this section.
				
				\item[] $\otsign{\secretkey{i}{}}{\secretkey{i-1}{}}{s_{i}}{s_{i-1}}{m}{\sigma}$
				
				\begin{itemize}
					\item parse  $s_{i-1}$ and read $\mathsf{ctr} \rightarrow i-1$ 
					\item compute one-time private key $\secretkey{i}{} = k_{i}$ from $n-i$ successive applications of the hash function $h$ on the private seed $\privateSeed$ (or read $k_{i}$ from $[\secretkey{}{}]_{1}^{n}$ if  storing the whole list)
					\item compute $\sigma = h(m||\publickey{i}{})||\secretkey{i}{} = h(m||k_{i-1})||k_{i} = h(m||h(k_{i}))||k_{i}$
					\item increment $\mathsf{ctr} \rightarrow \mathsf{ctr} + 1$, store it to updated state $s_{i}$
				\end{itemize}
				
				\item[] $\otverify{\publickey{}{},n}{m}{\sigma}$
				\begin{itemize}
					\item parse $\sigma = \sigma_{1}||\sigma_{2}$ to recover $\sigma_{2}=k_{i}$
					\item Output $b = (\exists j<n: h^{j}(k_{i}) =\publickey{}{}) \land (h(m||h(k_{i})) = \sigma_{1})$ 
				\end{itemize}
				\textbf{Note:} The verifier might choose to only store the most recent $k_{i}$ which verified correctly, and replace $\publickey{}{}$ with $k_{i}$ above resulting in fewer hash iterations.
			\end{itemize}
			\endgroup
	\end{minipage}}
	\caption{$n$-length Chain-based Signature Scheme}
	\label{prel:onetime-sigs-constr}
	\iffull
	\else
		\vspace{-0.05in}
		\fi
\end{figure}
\renewcommand{\figurename}{Figure}
\setcounter{figure}{1}

\begin{figure}
	\begin{tikzpicture}[>=stealth']
	{[start chain]
		\node[on chain] (A) {$k_{0}$};
		\node[on chain,join=by {<-,"$h$"},right=of A] (B) {$k_{1}$};
		\node[on chain,join=by {<-,"$h$"},right=of B] (C) {$k_{2}$};
		\node[on chain,join=by {<-,"$h$"},right=of C] (D) {$k_{3}$};
		\node[on chain,join=by {<-,"$h$"},right=of D] (E) {$k_{4}$};
		\node[on chain,join=by {<-,"$h$"},right=of E] (F) {$k_{5}$};
	}
\end{tikzpicture}

	\caption{Key generation for $n = 5$ and seed $k_5$. First signature uses as $\publickey{}{} = k_{0}$ and $\secretkey{}{} = k_{1}$.}
	\label{prel:onetime-sigs-ex}
	\iffull
	\else
		\vspace{-0.15in}
\fi
\end{figure}
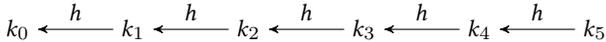

Our scheme, presented in Construction ~\ref{prel:onetime-sigs-constr}, is a chain-based one-time signature scheme\iffull  secure under an adaptive chosen-message attack as formally defined in Definition~\ref{defn:OTS} in the  Appendix\fi, with each key derived from its predecessor as $k_{i} \leftarrow h(k_{i+1})$, $i \in \{n-1, n-2, \ldots, 0\}$ and $h$ is a preimage resistant hash function.
The keys when used in pairs $(k_{i},k_{i-1})$ can be viewed as a public-private key pair for a one-time signature scheme, then forming a one-way hash chain with consecutive applications of $h$. The key $k_{n}$ serves as the ``private seed" for the entire key chain. In the context of integrity, a signer with a ``public key" $k_{i-1} = h(k_{i})$ would have to use the ``private key" $k_{i}$ to sign his message. Since each key can only be used once, the signer would then use $k_{i} = h(k_{i+1})$ as his ``public key" and $k_{i+1}$ as his ``private key", and continue in this fashion until the key chain is exhausted.

For example as shown in Figure \ref{prel:onetime-sigs-ex}, we can construct a hash chain from seed $k_{5}$. For signing the 1st message $m_1$, the signer would use $(\publickey{1}{},\secretkey{1}{}) = (\firstPublic,k_{1})$ and output signature $\sigma = h(m_1||\firstPublic)||k_{1}$. Similarly, for the 2nd message he would use $(\publickey{2}{},\secretkey{2}{}) = (k_{1},k_{2})$ and for the 5th message $(\publickey{5}{},\secretkey{5}{}) = (k_{4},k_{5})$.

Constructing  the one-way hash-chain described above, given the seed $k_{n}$, would require $O(n)$ hash operations to compute $k_{0} = h^{n}(k_n)$, which might be a significant computational cost for resource-constrained devices, as the length of the hash chain $n$ is typically large to offset the constraint of single-use keys. While we could pre-compute all the keys, which would cost a $O(1)$ lookup operation, we would then require $O(n)$ space, which is also a limited resource in such devices.
Using efficient algorithms~ \cite{jakobsson2002fractal,RSA:YSEL09}, we can achieve logarithmic storage and computational costs by placing ``pebbles" at positions $2^j = 1\cdots\left \lceil{log_2(n)} \right \rceil$,  which as shown in Section \ref{measurements-sign-verif} makes our construction practical for resource-constrained devices. The verifier's cost is $O(1)$ when storing the most recently-used $k$.

\iffull
\else
In the full version of the paper~\cite{fullversion} we present formal definitions of chain-based signatures and prove  unforgeability of our scheme.
\fi

\begin{table}[]\centering
	\caption{Hash-based scheme comparison. 
	}
	\label{onewaychain_comparison}
\resizebox{\columnwidth}{!}{%
\begin{tabular}{|l|l|l|l|}
	\hline
	Scheme                                                             & Architecture & NoSync             & NoDelay            \\ \hline
	TESLA~\cite{848446,Perrig02thetesla}                               & Chain        & \xmark & \xmark \\ \hline
	$\mu$TESLA 2-level chain \cite{NDSS:LiuNin03}              & Chain        & \xmark & \xmark \\ \hline
	Sandwich, 1-level, light chain \cite{ACNS:HuJakPer05}                        & Chain        & \xmark & \xmark \\ \hline

	Comb Skipchain \cite{ACNS:HuJakPer05}                        & Chain        & \cmark & \xmark \\ \hline
	
	Short Hash-Based Signatures~\cite{CANS:DahKra09}                   & Chain        & \cmark & \cmark \\ \hline
	XMSS~\cite{PQCRYPTO:BucDahHul11}                                   & Tree         &   \cmark    &   \cmark     \\ \hline
	BPQS \cite{EPRINT:CBHLNS18}                                        &  Chain   &  \cmark       &   \cmark      \\ \hline
	SPHINCS  \cite{EC:BHHLNP15}                                                          & Tree         &  \cmark      & \cmark         \\ \hline
	Our construction                                                   & Chain        & \cmark & \cmark \\ \hline
\end{tabular}
}
\iffull
\else
    \vspace{-0.15in}
    \fi
\end{table}

\begin{table*}[]\centering
	\caption{Hash-based scheme comparison for 256-bit messages and 256-bit security parameter. Sizes in bytes. $\mathbb{M}$,$\mathbb{F}$ and $\mathbb{H}$ denote MAC, PRF and hash operations respectively. $n$ denotes length of chain-based schemes.}
	\label{hashchain_comparison_concrete}
\begin{tabular}{|l|l|l|l|l|l|}
	\hline
	Scheme                                                            & $|\sigma |$       & $|\publickey{}{}|$ & $|\secretkey{}{}|$                                & $\mathsf{Sign()}$                                                      & $\mathsf{Verify()}$                   \\ \hline \cline{6-6} 
	Short Hash-Based Signatures \cite{CANS:DahKra09} & $128 + log_2n$ & $32$          & $64(\left\lceil{log_2(n)} \right \rceil+1)$ &     $(\left\lceil{log_2(n)} \right \rceil +3) \mathbb{H} + 3\mathbb{F} $                                                                   & $\left\lceil{log_2(n)} \right \rceil$ \\ \hline
	
	XMSS \cite{PQCRYPTO:BucDahHul11}                 & 2692 (4963)       & 1504 (68)          & 64                                                & 747$\mathbb{H}$ + 10315$\mathbb{F}$                                    & 83$\mathbb{H}$ + 1072$\mathbb{F}$     \\ \hline
	BPQS \cite{EPRINT:CBHLNS18}                      & 2176              & 68                 & 64                                                & 1073 $\mathbb{H}$                                                      & 1073 $\mathbb{H}$                     \\ \hline

	SPHINCS  \cite{EC:BHHLNP15}    & 41000             & 1056               & 1088                                              & 386$\mathbb{F}$, 385 PRGs, 167519 $\mathbb{H}$                         & 14060 $\mathbb{H}$                    \\ \hline
		Our Construction                                           & 32(64)            & 32                 & 32                                                & $\left\lceil{log_2(n)} \right \rceil \mathbb{H}$                 & 1 $\mathbb{H}$                        \\ \hline 
\end{tabular}
	
\end{table*}

\textbf{Comparison and Discussion.} 
Our scheme is directly comparable with the TESLA Broadcast Message Authentication Protocol~\cite{848446,Perrig02thetesla}, which follows a similar chain-based paradigm but requires some synchronicity between the sender and receiver, and the receiver can only verify a message after some delay. Several other chain-based schemes have been proposed \cite{NDSS:LiuNin03,ACNS:HuJakPer05,CANS:DahKra09}, forming a ``hierarchy'' of chains aiming to improve their efficiency in various aspects. However, most of them do not prevent the synchronicity requirement and delayed verification, in fact some even introduce additional requirements, e.g. special ``commitment distribution'' messages \cite{NDSS:LiuNin03}, where a verifier won't be able to verify a long series of signatures if those are lost.
As our scheme is hash-based, we compare with another family of hash-based signatures schemes that follow a tree structure, e.g. XMSS~\cite{EC:BHHLNP15} and SPHINCS \cite{PQCRYPTO:BucDahHul11}. While these schemes do not have any synchronicity assumptions, their performance is not suited for the low SWaP sensors we consider (even with resource-constrained device optimizations \cite{PKC:HulRijSch16}\iffull which we compare in detail in Appendix \ref{apdx:modsphincs}\fi).
In Table \ref{onewaychain_comparison} we compare with other hash-based schemes in terms of properties (i.e. no synchronicity or delays, denoted as NoSync and NoDelay respectively). In Table \ref{hashchain_comparison_concrete} we provide a concrete comparison with the rest of the schemes satisfying the above properties. In Section \ref{relwork} we discuss some of the above schemes in more detail. 

The caveat in our scheme is that it is susceptible to Man-in-the-Middle attacks. Specifically, an attacker might intercept a signature packet in transit (thus learning the ``ephemeral'' private key) and replace it with an arbitrary message and signature. Nevertheless such attacks are unlikely to be successful in our setting as discussed later in Section \ref{constr:sec-analysis}.

\subsection{Overall \sysname Construction}
\label{bbox-costr}
Our \sysname system consists of the following components as shown in
Figure \ref{secmodelfig-intro} illustrating our modifications to the
Hyperledger Fabric architecture.

\begin{figure}
	\includegraphics[width=0.47\textwidth]{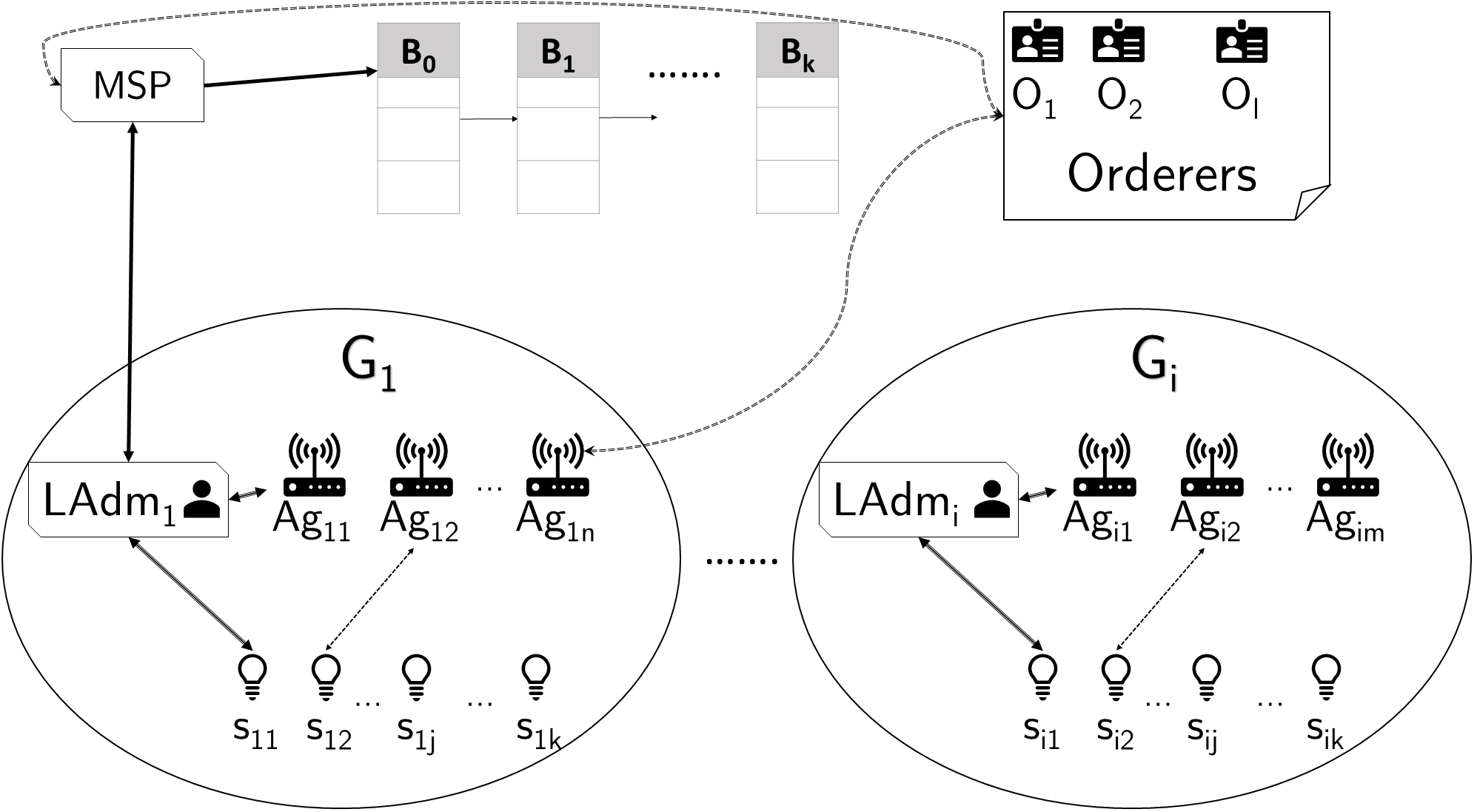}
	\caption{\sysname construction overview} \label{secmodelfig-intro}
	\iffull
	\else
	\vspace{-0.15in}
	\fi
\end{figure}
\begin{itemize}[leftmargin=*]
	\item A (trusted) Membership Service Provider\footnote{The MSP also
  includes the system administrator.} $\msp$, which resembles a
    Trusted Party, and is responsible for authorizing participation in
    the system. The $\msp$ bootstraps the system and forms the genesis
    block, which contains hardcoded information on its public key and
    the consensus algorithm. The genesis block also initializes the
    authorized system participants and the system policy (denoted by
    $\policy$), both of which can be changed later.
	\item A permissioned blockchain $\blockchain$, which consists of
    normal ``transaction" blocks and special ``configuration" blocks.
	\item A configuration $\config$ for $\blockchain$, containing
    membership information for local administrator, orderer and
    aggregators, as well as system policy data. As in Hyperledger
    Fabric, $\config$ is stored in the configuration blocks.
	\item A set of orderer nodes $\ordererset: \{\orderer{1},
    \orderer{2},...,\orderer{\ell} \}$, responsible for achieving
    consensus on forming new blocks. These nodes are assumed static,
    although it can be extended to handle dynamic membership.
	\item A set of device groups $\iotGroupSet: \{\iotGroup{1},
    \iotGroup{2},...,\iotGroup{n} \}$. On each group $\iotGroup{i}$
    there exist:

	  \begin{itemize}
		\item A local administrator $\localadmin{i}$, responsible for its group membership, which includes a set of aggregators and sensors. In order for $\localadmin{i}$ to add or remove an aggregator in the system must also have consent from the $\msp$, however he does not need permission to handle sensor membership.
		\item A set of aggregators $\aggrset{i} : \{\aggr{i}{1}, \aggr{i}{2}, ... , \aggr{i}{m}\}$, which have also the role of \emph{peers} in Hyperledger Fabric. We assume aggregators can perform regular cryptographic operations and aggregate data received from sensors. As discussed in our modified Hyperledger, they also briefly take the role of a ``client". 
		\item A set of sensors $\sensset{i} : \{\sens{i}{1}, \sens{i}{2}, ... ,\sens{i}{k} \} $, which are assumed to be resource-constrained devices. These would be the equivalent of \emph{clients} in the original Hyperledger Fabric architecture, but here they are assumed to only broadcast their data to nearby group aggregators, without expecting a confirmation. The only step where interaction occurs is during initial setup, where they exchange their public key and other initialization data  
		with the group administrator. We also assume that sensors can only perform basic cryptographic operations (i.e. hashing), meaning they can't perform public key cryptography operations that use exponentiations.
		
	\end{itemize}

\end{itemize}

We first describe the initialization process for the system's $\msp$
and genesis block $\block{0}$. After generating its keys, $\msp$
bootstraps the system with pre-populated participation whitelists of
orderers, local group administrators, and aggregators (denoted by
$\ordererlist, \ladmlist$ and $\peerlist$ respectively) and a
pre-defined system policy. Sensors do not need to be tracked from the
$\msp$, as participation authorization for sensors is delegated to the
group local administrators. Local administrators control authorization
privileges with a respective sensor whitelist denoted by
$\sensorList{}$, and they also keep a whitelist of group aggregators
denoted by $\agglistladmin$.

Furthermore, we detail the functionality of reading or updating the system's
configuration, including the permissioned participants and the system
policy. Orderers and local administrators can only be authorized for
participation by the $\msp$, while aggegators need their local
administrator's approval as well. As discussed above, sensor
participation is handled by the local administrators, however, group
aggregators also keep track of group participation for sensors in a
passive manner. The local administrators are also responsible for
revoking participation rights for aggregators and sensors belonging in
their group. In general, granting or revoking participation privileges
is equivalent to adding or removing the participant's public key from
the respective whitelist. \iffull Note that membership verification
can also be handled by accumulators~\cite{EPRINT:BCDLRS17,C:CamLys02}
instead of whitelists to achieve lists of constant size, however we
keep whitelists for simplicity purposes. \fi

Furthermore, on a high-level, sensors ``blindly'' broadcast their data
as signed transactions. Nearby aggregators (belonging to the same
device group) receive and verify the data and collect the required
amount of signatures from other aggregators in the system (as defined
by the system policy), and then submit the signed transaction to the
ordering service.  The orderers then by running the consensus
protocol, ``package'' the collected transactions to form a blockchain
block. Finally, the block is sent back to the aggregators, who as the
blockchain ``maintainers'', append it to the blockchain. The core
system functionalities are shown in Construction \iffull
\ref{constr:algorithms} and we provide a detailed description of all
system algorithms and protocols in Appendix
\ref{sec-model-defs}. \else \ref{constr:algorithms}. \fi.

\renewcommand{\figurename}{Construction}
\setcounter{figure}{1}
\begin{figure}
	
	\fbox{\begin{minipage}{\linewidth}

			\begingroup
			\fontsize{8pt}{12pt}\selectfont
			
			\begin{itemize}[leftmargin=*]
				
				\item[]  $\sensjoin $ 
				\begin{itemize}
					\item Sensor generates a seed uniformly at random, and
            generates hash chain through $\mathsf{OTKeyGen}$
            algorithm. (computation is outsourced to a powerful
            device)
					\item Sensor stores hash chain ``pebbles'' in its memory and
            outputs the last element of the chain as public key to the
            $\localadmin{}$
				\end{itemize}

				\item[]  $\senssenddata $ 
				\begin{itemize}
					\item  Sensor computes signature $\sigma$ for broadcasted data $m$ using $\mathsf{OTSign}$ algorithm 
					
					\item $\sens{i}{j}$ broadcasts $\sigma$ to aggregators in group.
					\item Each aggregator after verifying the signature through $\mathsf{OTVerify}$, checks if any other aggregator received a conflicting message. It adds the message - signature pair in its local state, pending for blockchain submission.
				\end{itemize}
			
				\item[] $\aggsendtx$ 
				\begin{itemize}
					\item Aggregator parses its local state for pending blockchain operations as a transaction.
	
					\item Aggregator computes signature on transaction and sends it to other aggregators.

					\item Each aggregator after verifying signature and sender membership in the system, signs the transaction.
					
					\item The sending aggregator submits signed transaction to ordering service after reaching necessary number of signatures, as dictated by system policy.

					\item Each orderer after verifying signatures, runs consensus algorithm which outputs a blockchain update operation.
					
					\item The blockchain operation is received by orderers who update the blockchain state.
				\end{itemize}
			
				\item[] $\mathsf{SensorTransfer}$ 
				\begin{itemize}
					\item Aggregator encrypts the state for the sensor under the reveiving aggregator's $\publickey{}{}$ (i.e. the most recent received $\secretkey{i}{}$) and submits it to the blockchain using $\aggsendtx$. Sensor is removed from the device group and is transferred to new group.
					\item Receiving aggregator decrypts state from the blockchain and resumes verification of received data from sensor.
				\end{itemize}
			\end{itemize}
			\endgroup
	\end{minipage}}
	\caption{\sysname core algorithms and protocols}
	\iffull
	\else
\vspace{-0.15in}
\fi
	\label{constr:algorithms}
\end{figure}

\renewcommand{\figurename}{Figure}
\setcounter{figure}{3}

\emph{Sensor join:} Defined by $\sensjoin()$ protocol between a sensor
and a Local administrator. This is the only phase when a sensor is
interacting with the system, as the $\localadmin{}$ generates a new
hash chain and its associated pebbles in a powerful device. The
pebbles are then loaded to the sensor, and $\localadmin{}$ updates the
group aggregators with the new sensor's public key.

\emph{Sensor broadcast:} Defined by $\senssenddata()$ protocol between
a sensor and group aggregators. For some data $m$, the sensor computes
the one-time hash-based signature using $\mathsf{OTSign}()$ and the
signed data $m,\sigma$ is broadcasted to all group aggregators. If
there are any aggregator who receives a different signed message
$m',\sigma$, the message is discarded, else it remains in the
aggregator's pending memory for processing.

\emph{Aggregator transaction:} Defined by $\aggsendtx()$ protocol
between aggregators and orderers. For an aggregator to submit
aggregated data to the blockchain, it first needs to collect the
needed signatures from other aggregators. Then it submits the signed
transaction to the ordering service, which in turn executes the
$\consensus()$ algorithm to construct a block with a set of signed
transactions. Finally, the block is transmitted to the aggregators, who
append the block as the blockchain maintainers.

\emph{Sensor transfer:} Defined by
$\mathsf{SensorTransfer}$ algorithm, executed when a sensor is
transferred to a new location or device group.  The handing-over
aggregator saves its state of our signature scheme w.r.t. that sensor
and encrypts it on the blockchain under the receiving aggregator's
public key. After sensor transfer, the receiving aggregator decrypts
that state and resumes message verification.

Optionally in our construction, a symmetric group key $\groupKey{}$
can be shared between each group's local administrator, aggregators
and sensors for confidentiality purposes. However, the additional
encryption operations have an impact mainly on sensors, which have
constrained computational and storage resources. Note that using such
key for authentication or integrity would be redundant since these
properties are satisfied using public keys existing in the appropriate
membership lists and revocation operations can still be performed at
an equivalent cost using those lists.

\subsection{Security Analysis}
\label{constr:sec-analysis}
\iffull
\begin{thm} [informal]
	The construction in Section \ref{bbox-costr} satisfies 
	participant authentication (\ref{partic-auth}), sensor health \ref{sensor-health} and device group safety properties (\ref{partic-malic}) assuming  $(\mathsf{SignGen}$, $\mathsf{Sign}$, $\mathsf{SVrfy})$ is an existentially unforgeable under a chosen-message attack signature scheme, $(\mathsf{OTKeyGen},\mathsf{OTSign},\mathsf{OTVerify})$ is an unforgeable one-time chain based signature scheme, $\msp$  is honest and not compromised and the consensus scheme $(\mathsf{TPSetup}$, $\mathsf{PartyGen}$, $\mathsf{TPMembers}, \consensus)$  satisfies the consistency property. 
\end{thm}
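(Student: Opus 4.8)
The plan is to prove each property by a \emph{reduction}: assuming a PPT adversary $\adv$ that breaks one of (\ref{partic-auth}), (\ref{sensor-health}), or (\ref{partic-malic}) with non-negligible probability while interacting with an otherwise honest execution of the system of Section~\ref{bbox-costr}, I would build a reduction $\BBB$ that breaks exactly one of the three stated assumptions. Because the different sub-properties rest on different primitives, I would split the argument by the \emph{type of party} the adversary targets and, in each case, route the resulting forgery or inconsistency to the appropriate assumption. The overall bound then follows by a union bound over the constantly many cases, so that $\adv$'s advantage is at most the sum of the EUF-CMA advantage against $(\mathsf{SignGen},\mathsf{Sign},\mathsf{SVrfy})$, the one-time advantage against $(\mathsf{OTKeyGen},\mathsf{OTSign},\mathsf{OTVerify})$, and the consensus consistency error, each negligible in $\secpar$.

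\textbf{Authentication of the heavyweight parties (\ref{partic-auth}b, \ref{partic-auth}c and \ref{partic-malic}).} For orderers, local administrators, and aggregators the anchor is the honesty of $\msp$: every participation credential is a standard signature produced by $\msp$ (or, for an aggregator's sensors, co-authorized by the responsible $\localadmin{}$) on the party's public key, and honest verifiers accept messages only from keys on the appropriate whitelist. I would embed the EUF-CMA challenge key as the key of the honest authority that $\adv$ must impersonate---$\msp$ for (\ref{partic-malic}) at the group-creation level, an honest $\localadmin{}$ or $\aggr{}{}$ for (\ref{partic-auth}c) and cross-group tampering---and answer all legitimate signing requests with the EUF-CMA oracle. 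If $\adv$ makes an honest node accept an unauthorized orderer, group, aggregator signature, or cross-group management action, the accepted object embeds a valid signature on a message the oracle never signed, which $\BBB$ outputs as a forgery. Sub-property (\ref{partic-auth}a) additionally uses consensus: an unauthenticated orderer cannot inject a block because its consensus messages carry no valid credential, and by \emph{consistency} honest nodes never adopt a fork originating outside the authorized orderer set (within the $f$-byzantine bound of the consensus assumptions).

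\textbf{Sensor cloning (\ref{sensor-health}a).} Here the anchor is unforgeability of the chain-based scheme. I would plant the target sensor's chain head $\publickey{}{}=k_0$, released during $\sensjoin$, as the challenge public key, and answer each legitimate $\senssenddata$ call through the one-time signing oracle, which reveals the next key $k_i$. A clone that gets a \emph{fresh} message accepted by an honest aggregator under this head yields a valid $\mathsf{OTVerify}$ triple on a message never signed, i.e.\ a forgery against $(\mathsf{OTKeyGen},\mathsf{OTSign},\mathsf{OTVerify})$.

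\textbf{Message injection / MITM (\ref{sensor-health}b) --- the main obstacle.} This is the hard case because, as noted in Section~\ref{constr:sec-analysis}, the chain-based signature is \emph{intentionally not} MITM-resistant in isolation: an attacker intercepting $\sigma$ learns the ephemeral $k_i$. The proof therefore cannot reduce to signature unforgeability and must instead lean on the broadcast model together with consensus \emph{consistency} and the threat-model assumption that not all group aggregators are compromised at once. I would formalize the wireless channel so a sensor broadcast reaches every in-range aggregator, and argue that for an injected or modified $(m',\sigma)$ to go undetected it must be delivered consistently to \emph{every} honest aggregator; but at least one honest aggregator also sees the genuine broadcast (or its absence), so the conflict-detection step of $\senssenddata$ flags a discrepancy. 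Concealing that discrepancy would require two honest nodes to hold divergent ledger views, contradicting consistency, so a successful undetected injection reduces to a consistency break. The delicate points I expect to wrestle with are (i) pinning down precisely what ``vicinity'' and ``more than one aggregator'' buy in the network model, and (ii) excluding the degenerate single-honest-aggregator corner so the reduction is sound---exactly where the scheme's security is purchased by the blockchain rather than by the signature.
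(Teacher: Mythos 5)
Your proposal is correct and follows essentially the same route as the paper's own proof sketch: participant authentication and device group safety reduce to EUF-CMA unforgeability of the standard signature scheme anchored at the honest $\msp$ (and $\localadmin{}$), sensor cloning reduces to unforgeability of the chain-based one-time scheme, and MITM resistance is argued not from the signature but from the multi-aggregator broadcast model plus cross-aggregator conflict detection and ledger consistency. Your version is merely a more explicitly reduction-structured rendering (with the union bound and oracle embeddings spelled out) of the same case decomposition the paper uses, and you correctly flag the MITM case as the part whose security comes from the blockchain rather than the primitive.
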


\noindent \textit{Proof Sketch.}
We now provide a proof sketch arguing about the security of our scheme.

\noindent \textit{\ref{partic-auth} Participant Authentication.} 
We require that only authenticated participants can participate in the different functions of our protocol. We argue that if an 
adversary breaks the participant authentication property then it could break unforgeability of the underlying signature scheme.
Specifically:
\begin{enumerate}
	\item For property \ref{partic-auth-a}, in protocol $\ordereradd$ (coupled with $\updateconfig$) the use of an unforgeable signature scheme guarantees that no one but the MSP can authenticate orderers, while in protocol $\consensus$ the same scheme guarantees that only the authenticated orderers can perform this core functionality. Also recall from the previous property that an adversary being able to authenticate orderers could break the immutability property. 
	\item For property \ref{partic-auth-b}, in protocol $\localadmreg$ (coupled with $\updateconfig$) the use of an unforgeable signature scheme guarantees that no one but the MSP can authenticate Local Administrators, while $\aggsetup, \aggAdd, \aggupd$, $\senssenddata$ and $\grprem$  guarantee that only the authenticated local administrators can perform these functionalities.
	\item For property \ref{partic-auth-c}, in protocol $\aggAdd$ (coupled with \\$\updateconfig$) the use of an unforgeable signature scheme guarantees that no one but the MSP can authenticate Aggregators, while $\aggupd$ and $\aggsendtx$  the same scheme guarantees that only the authenticated aggregators can perform these functionalities.
\end{enumerate}

\noindent \textit{\ref{sensor-health} Sensor health.} In order for an
adversary to impersonate/clone a sensor, it would either have
to break the unforgeability of our signature scheme, or launch a MITM
attack which is a potential attack vector as discussed in Section
\ref{our-primitive}. 

As discussed in Section \ref{threatmdl}, we consider jamming attacks
at the physical layer outside the scope of this paper. Given the
nature of our setting where a sensor's broadcast has typically short
range, we consider MITM and message injection attacks hard and
unlikely to launch but we still consider them as part of our threat
model. Even in these unlikely scenarios, MITM attacks can be easily
mitigated in \sysname. A first approach for detecting such attacks is
to leverage blockchain properties, where aggregators can compare data
received from a sensor in the blockchain level. Our assumption here is
that sensor data can be received by more than one aggregators in the
vicinity of the sensor which is a reasonable senario for typical dense
IoT deployments. If there's even one dissenting aggregator, probably
victim of a MITM attack, all the associated data would be considered
compromised and disregarded and the operator will be notified of the
data discrepancy detected. The above approach while simple, still
permits a MITM attacker to ``eclipse'' a sensor from the system using
a jamming attack.

An alternative approach is to make a proactive check in a group level,
where each aggregator would verify the validity of its received data
by comparing it with other aggregators before even submitting it to
the blockchain. In both above strategies, the attacker's work
increases significantly because he would need to launch simultaneous
MITM attack between the sensor and all aggregators in the vicinity. We
adopt the second approach in our Construction in Appendix
\ref{sec-model-defs}.

The above properties and strategies ensure that only data broadcasted
by authenticated sensors are accepted by aggregators in
$\senssenddata$.

\noindent \textit{\ref{partic-malic} Device group safety.} An
adversary wanting to break device group safety would either
have to add or revoke aggregators or sensors in an existing group
through $\aggAdd$, $\aggupd$ or $\grprem$ thus breaking unforgeability
of the signature scheme used in these protocols or interfere with
existing authenticated sensors in a group through $\senssenddata$ by
breaking unforgeability of the one-time chain based signature scheme.
\qed

Integrity, non-repudiation and data provenance requirements
(\ref{sec-nonrepud}) are core properties of
any digital signature scheme thus directly satisfied in \sysname.

Additionaly we argue that our system is DOS resilient (\ref{sec-dos-resil}) in the following scenarios:
\begin{itemize}
	\item $\msp$ offline or not available: The core system functionality is not affected, although there can be no configuration changes in the system. All algorithms and protocols (except those involving adding or revoking orderers, local administrators or aggregators or those involving system policy changes) perform authentication  through the configuration blocks and not the $\msp$ itself.
	\item Orderers unavailable: Reduces to tolerance properties of consensus algorithm.
	\item  $\localadmin{}$ unavailable: The core system functionality is not affected, although there can be no administrative operations in the respective group. 
	\item $\aggr{}{}$ unavailable: Transactions are not processed only in the respective groups. However if more than $\tau$ aggregators are unavailable as required in $\aggsendtx$, no transactions can be processed in the whole system.
\end{itemize}
Also an adversary might attempt to flood an aggregator by broadcasting messages and arbitrary signatures. In this scenario the aggregator would be overwhelmed since by running $\mathsf{OTVerify}$ for each message-signature pair separately, it would have to check the signature against all hash chain values up to the first public key. To mitigate this we propose checking only for a few hashes back to the chain (defined by a system parameter ``maxVerifications" as shown in Algorithm \ref{measurements-aggregator-pseudocode} in the Appendix). This parameter can be set by the local administrator but should be carefully selected. A small value might result in need of frequent re-initializations for the sensors - if a long network outage occurs between a sensor and an aggregator and they lose ``synchronization", the local administrator should reinitialize the sensor in the device group. On the other hand, a large value would amplify the impact of DOS attacks.  

Policy and configuration security (\ref{sec-pol-conf}) is ensured by algorithms $\updateconfig$ and $\readconfig$, as the first  algorithm creates a special configuration transaction signed by the $\msp$ and the second returns configuration data originating from such a transaction.

Revocation (\ref{sec-revoc}) is made possible by $\noderem$ and $\grprem$ (in conjuction with whitelists used throughout all system protocols and algorithms). Also the unforgeability of the underlying signature scheme ensures that only the $\msp$ (and the $\localadmin{}$ respectively only for aggregators and sensors) can revoke these credentials. 

\noindent \textbf{Remark.} One might suggest to use MACs instead of our proposed signature scheme for sensor authentication. We discuss this in Appendix \ref{constr:notmacs}.

\else

Given the threat model discussed in Section \ref{threatmdl}, most of
the security properties (all but \ref{sensor-health} and
\ref{sec-dos-resil}) rely on the security of the underlying signature
scheme and consensus properties. As it is straightforward to prove
security for these, we focus on \textit{\ref{sensor-health} sensor
  health} security property (which includes resilience to MITM
attacks) and \textit{\ref{sec-dos-resil}} (resilience to DoS attacks).

In order for an adversary $\adv$ to impersonate/clone a sensor, it
would either have to break the unforgeability of our signature scheme,
or launch a MITM attack which is a potential attack vector as
discussed in Section \ref{our-primitive}.

As discussed in Section \ref{threatmdl}, we consider jamming attacks
at the physical layer outside the scope of this paper. Given the
nature of our setting where a sensor's broadcast has typically short
range, we consider MITM and message injection attacks hard and
unlikely to launch but we still consider them as part of our threat
model. Even in these unlikely scenarios, MITM attacks can be easily
mitigated in \sysname. A first approach for detecting such attacks is
to leverage blockchain properties, where aggregators can compare data
received from a sensor in the blockchain level. Our assumption here is
that sensor data can be received by more than one aggregators in the
vicinity of the sensor which is a reasonable senario for typical dense
IoT deployments. If there's even one dissenting aggregator, probably
victim of a MITM attack, all the associated data would be considered
compromised and disregarded and the operator will be notified of the
data discrepancy detected. The above approach while simple, still
permits a MITM attacker to ``eclipse'' a sensor from the system using
a jamming attack.

An alternative approach is to make a proactive check in a group level,
where each aggregator would verify the validity of its received data
by comparing it with other aggregators before even submitting it to
the blockchain. In both above strategies, the attacker's work
increases significantly because he would need to launch simultaneous
MITM attack between the sensor and all aggregators in the vicinity.

Additionaly, we argue that our system is DOS resilient (\ref{sec-dos-resil}) in the following scenarios:
\begin{itemize}[leftmargin=*]
	\item $\msp$ offline or not available: The core system functionality is not affected, although there can be no configuration changes in the system. All algorithms and protocols (except those involving adding or revoking orderers, local administrators or aggregators or those involving system policy changes) perform authentication  through the configuration blocks and not the $\msp$ itself.
	\item Orderers unavailable: Reduces to tolerance properties of the  consensus algorithm.
	\item  $\localadmin{}$ unavailable: The core system functionality is not affected, although there can be no administrative operations in the respective group. 
	\item $\aggr{}{}$ unavailable: Transactions are not processed only in the respective groups. \iffull However, if more than $\tau$ aggregators are unavailable as required in $\aggsendtx$,\else However if the number of unavailable aggregators exceeds a certain threshold, \fi no transactions can be processed in the whole system.
\end{itemize}
Also, an adversary might attempt to flood an aggregator by broadcasting messages and arbitrary signatures. In this scenario, the aggregator would be overwhelmed since by running $\mathsf{OTVerify}$ for each message-signature pair separately, it would have to check the signature against all hash chain values up to the first public key. To mitigate this, we propose checking only for a few hashes back to the chain specified by a parameter (defined by a system parameter ``maxVerifications" as shown in Algorithm \ref{measurements-aggregator-pseudocode}). This parameter can be set by the local administrator but should be carefully selected. A small value might generate the need of frequent re-initializations for the sensors - if a long network outage occurs between a sensor and an aggregator and they lose ``synchronization", the local administrator should reinitialize the sensor in the device group. On the other hand, a large value would amplify the impact of DoS attacks.  

\begin{algorithm}
	\caption{Sensor send data}
	\label{measurements-sensor-pseudocode}
	\begin{algorithmic}[1]
		\STATE tempkey $\leftarrow \firstPublic$
		\STATE initPebbles()
		\WHILE {True}
		\STATE m $\leftarrow$ readSensor()
		\STATE output.type $\leftarrow$ ``payload"
		\STATE output.data $\leftarrow$ m
		\STATE transmit(output)
		\STATE T1.start()
		\STATE hashedData $\leftarrow h($m||tempkey$)$
		\STATE output.type $\leftarrow$ ``hash"
		\STATE output.data $\leftarrow$ hashedData
		\STATE transmit(output)
		\STATE tempkey $\leftarrow$ computePebbles()
		\COMMENT{as in~\cite{jakobsson2002fractal}}
		\STATE output.type $\leftarrow$ ``secretKey"
		\STATE T1.end()
		\STATE output.data $\leftarrow$ tempkey
		\STATE transmit(output)
		\ENDWHILE
		
	\end{algorithmic}
\end{algorithm}

\begin{algorithm}
	\caption{Aggregator receive data}
	\label{measurements-aggregator-pseudocode}
	\begin{algorithmic}[1]
		\STATE publickey $\leftarrow \firstPublic$
		\STATE verifications $\leftarrow$ 0
		\WHILE {verifications $<$ maxVerifications}
		\STATE check1 $\leftarrow$ False
		\STATE check2 $\leftarrow$ False		
		\STATE  read $\leftarrow$ input()
		\IF {read.type = ``payload"}
		\STATE T3.start()	
		\STATE m $\leftarrow$ read.data
		\ELSIF {read.type = ``hash"}
		\STATE $s_{1} \leftarrow$ read.data
		\ELSIF {read.type = ``secretKey"}
		\STATE $s_{2} \leftarrow$ read.data
		\STATE tempkey $\leftarrow s_{2}$
		\STATE $i \leftarrow 0$
		\STATE T2.start()
		\WHILE{ $i <$ maxVerification $\land$ doWhile = True  } 
		\IF { $h($tempkey$)$ = publickey}
		\STATE check1 $\leftarrow$ True
		
		\IF { $h($m||publickey$) = s_{1}$}
		\STATE check2 $\leftarrow$ True
		\STATE{publickey $\leftarrow$ secretkey }
		\ENDIF
		\STATE doWhile = False		 
		\ELSE 
		\STATE{tempkey $\leftarrow$ $h($tempkey$)$ }
		\STATE i++
		\ENDIF
		\ENDWHILE
		\IF {check1 $\land$ check2 = True}
		\STATE print(``Payload m is valid")
		\STATE verifications++
		\STATE T2.end()
		\ELSE 
		\STATE{print("Verification failed")}
		\ENDIF
		
		\STATE T3.end()
		\ENDIF
		
		\ENDWHILE
	\end{algorithmic}
\iffull
\else
	\vspace{-0.05in}
\fi

\end{algorithm}

\fi

\section{Performance Evaluation \& Measurements}
\label{measurements}
\subsection{The IIoT Setting With Constrained Devices}
\label{measurements-scenario}

IIoT environments are complex systems comprising of heterogeneous devices that can be tracked at different organizational layers, namely (a) computational, (b) network, (c) sensor/edge layers \cite{wu2020convergence}. Devices at the higher levels are powerful servers dedicated to the analysis of data, storage, and decision making. They frequently reside outside the factory premises, i.e., in cloud infrastructures.
\begin{table}
	\caption{Classes of Constrained Devices in terms of memory capabilities according to RFC 7228.}
	\label{tab:constrcapab}
	\resizebox{0.5\columnwidth}{!}{
		\begin{tabular}{lll}
			\hline
			Name    & RAM      & Flash     \\ \hline
			Class 0 & $<<$10 KiB & $<<$100 KiB \\
			Class 1 & ~10 KiB  & ~100KiB   \\
			Class 2 & ~50KiB   & ~250KiB  \\ \hline
		\end{tabular}
	}
\iffull
\else
    \vspace{-0.15in}
    \fi
\end{table}
On the other hand, on-site and at the edge layer, a myriad of low-SWaP devices such as sensors and actuators reside, assigned with the tasks of posting their data or reconfiguring their status based on received instructions. On typical real-life IIoT deployments, the processing speed of such devices ranges from tens (e.g., Atmel AVR family) to hundreds of Mhz (e.g., higher-end models of ARM Cortex M series). Diving even deeper, at the lower end of the spectrum, one may observe sensor-like devices that are severely constrained in memory and processing capabilities. 

Such extremely constrained devices have been considered by RFC 7228~\cite{RFC7228} 
which underlines that ``most likely they will not have the resources required to communicate directly with the Internet in a secure manner''. Thus, the communication of such nodes must be facilitated by stronger devices acting as gateways that reside at the network layer. In Table \ref{tab:constrcapab} we provide a taxonomy of constrained devices residing at the edge of IIoT  according to RFC 7228. 

In this work, we consider a generic IIoT application scenario that involves Class 0 devices which are connected to more powerful IoT gateways in a sensor/gateway ratio of 10:1.
The chosen platforms and all experimental decisions were made to provide a realistic scenario under the following assumptions: (a) devices severely constrained in terms of computational power and memory resources (Class 0) and (b) moderately 
demanding in terms of communication frequency (i.e. transmission once every 10 seconds).

\subsection{Evaluation Setup}

Our testbed consists of Arduino UNO R3~\cite{arduino-uno-rev3} open-source microcontroller boards equipped with ATmega328P 16 MHz microcontroller and 2KB SRAM fitted with a Bluetooth HC-05 module. These devices are really constrained and they represent the minimum of capabilities in all of IoT sensors utilized in our experimental scenarios (Class 0 in Table \ref{tab:constrcapab}).
For the gateways, we use Raspberry Pi 3 Model B devices equipped with a Quad Core 1.2GHz  BCM2837 64bit CPU and 1GB RAM.

We first focus on evaluating our system in a device group level\footnote{Our code is available at \url{https://github.com/PanosChtz/Black-Box-IoT}}. 
We use the one-time signature scheme outlined in Construction~\ref{prel:onetime-sigs-constr} and SHA256 as the hash function $h()$. 
The length of the hash chain \iffull as defined in section \ref{prel:onetime-sigs} \fi sets the upper bound on the number of one-time signatures each sensor $\sens{i}{}$ can generate. In the case where the sensor's available signatures are depleted, it would enter an ``offline" state and the Local Administrator $\localadmin{}$ would need to manually renew its membership in the system through the $\sensjoin$ protocol. In a large-scale deployment of our system however, frequent manual interventions are not desirable, so our goal is to pick a sufficiently large $n$ such that the available one-time signatures to the sensor last for the sensor's lifetime.
As discussed above and taking similar schemes' evaluations into account~\cite{TCHES:AmiCurZbi18}, we consider a frequency of one (1) signing operation per 10 seconds for simplicity. We consider sensor lifetimes between 4 months as an lower and 21 years as a upper estimate (as shown in Table \ref{measurements-table}), which imply a hash chain between $2^{20}$ and $2^{26}$ elements respectively.

In the setup phase, we pre-compute the hash-chain as needed by the pebbling algorithm~\cite{jakobsson2002fractal} and load the initial pebble values into the sensor. We first measure the actual needed storage on the sensor for various values of $n$. Note that for $n=2^{26}$, the lower bound for needed storage using a 256-bit hash function is about $26 \cdot 256 =$ 832 bytes of memory. Then we set the sensor device to communicate with the aggregator through Bluetooth in broadcast-only mode and measure the maximum number of signing operations that can be transmitted to the aggregator for various values of $n$, as well as the verification time needed on the aggregator side since it will need to verify a large number of sensor messages. The fact that we are able to run \sysname on Class 0 devices demonstrates the feasibility of our approach for all low-SWaP sensors.

\subsection{Signing and Verification}
\label{measurements-sign-verif}

We run our experiments under different scenarios and multiple times. Our evaluation results, which are shown in Table \ref{measurements-table}, represent the statistical average across all measurements. Note that for measuring the average signature verification time on the aggregator side, we assume that the aggregator is able to receive all the data broadcasted by the sensor. If a network outage occurs between them (and the sensor during the outage keeps transmitting), the aggregator after reestablishing connection would have to verify the signature by traversing the hash chain back up to the last received secret key, which incurs additional computation time (in Figure \ref{hash-vs-ecdsa-outage} we show the associated verification cost in such occasions). As expected, the verification time is relatively constant in all measurements, about 0.031ms on average. This suggests that such an aggregator could still easily handle $10^{5}$ sensors transmitting data for verification (as we considered one transmission every 10 seconds for each sensor).

Table \ref{measurements-table}, shows that the pebbles data stucture consumes most of the required memory storage in our implementation, while the remaining program requires a constant amount of memory for any number of pebbles. 
We also observe a slight impact of the number of pebbles on the total verification time, which is mainly affected by the sensor's capability to compute the signature on its message and the next secret key. For example, the sensor needs 50ms to compute the next signature with $n=2^{26}$ and 49.95ms for $n=2^{24}$. Also by comparing the total verification time with the signature computation time, we conclude the extra 14.3 msec are needed for transmitting the signature.

\begin{table}[]
		\caption{Evaluation for sensor-aggregator protocol - Average verification times}
	\label{measurements-verif-timers}
	\resizebox{\columnwidth}{!}{%
	\begin{tabular}{l|l|l|l|l|l|l|l|l|}
		\cline{2-9}
		& \multicolumn{4}{c|}{T2 ($\mu$sec)} & \multicolumn{4}{c|}{T3 (msec)} \\ \hline
		\multicolumn{1}{|l|}{maxV} & 20       & 22      & 24      & 26      & 20      & 22     & 24     & 26     \\ \hline
		\multicolumn{1}{|l|}{100}              & 28.12    & 31.18   & 31.34   & 28.95   & 42.83   & 42.84  & 42.91  & 43.08  \\ \hline
		\multicolumn{1}{|l|}{500}              & 30.78    & 31.94   & 30.31   & 30.63   & 51.25   & 51.23  & 51.37  & 51.39  \\ \hline
		\multicolumn{1}{|l|}{1000}             & 31.39    & 30.96   & 31.14   & 30.74   & 55.27   & 55.35  & 55.36  & 55.41  \\ \hline
		\multicolumn{1}{|l|}{2500}             & 30.57    & 30.97   & 32.39   & 30.86   & 60.61   & 60.65  & 60.7   & 60.78  \\ \hline
		\multicolumn{1}{|l|}{5000}             & 33.26    & 31.7    & 31.66   & 31.43   & 64.66   & 64.74  & 64.79  & 64.83  \\ \hline
		\multicolumn{1}{|l|}{10000}            & 33.34    & 33.38   & 33.6    & 31.41   & 68.68   & 68.75  & 68.78  & 68.86  \\ \hline
	\end{tabular}
}
\iffull
\else
	\vspace{-0.15in}
	\fi
\end{table}

In Table \ref{measurements-verif-timers} we provide a series of measurement results for the average verification time of 1 signature on the aggregator. By T2 we denote the verification time of a signature and by T3 the total verification time by an aggregator \iffull (we provide detailed algorithms for our measurements in Appendix \ref{apdx:evaldetails}.) \else(as shown in Algorithm \ref{measurements-aggregator-pseudocode}).
\fi
The average  total verification time (denoted by maxV) increases significantly as we require more verification operations from the Arduino device. This happens because of dynamic memory fragmentation as the pebbling algorithm updates the pebble values.

\paragraph{Comparison with ECDSA} We compare our lightweight scheme with ECDSA, which is commonly used in many blockchain applications. We assume IoT data payloads between 50 and 220 bytes,  which can accommodate common data such as timestamps, attributes, source IDs and values. In Table \ref{sign-verify-costs} we show that our scheme is more efficient compared to ECDSA by 2 and 3 orders of magnitude for signing and verification respectively. Even when considering larger payload sizes which impact hash-based signature operations, 
our scheme remains much more efficient. However, verification cost for our scheme  increases linearly during network outages, and as shown in Figure \ref{hash-vs-ecdsa-outage} it might become more expensive than ECDSA when more than 2400 signature packets are lost. 

Another metric we consider is energy efficiency, which is of particular importance in IoT applications that involve a battery as power source. Our experiments depicted in Figure \ref{energy-efficiency} show that our ATmega328P microcontroller can perform more than 50x hash-based signing operations compared to the equivalent ECDSA operations for the same amount of power. 
Finally, while our hash-based signature normally has a size of 64 bytes (as shown in Table \ref{measurements-table}), we can ``compress'' consecutive signatures along a hash chain to 32 bytes by only publishing the most recent $k_{i}$. The verifier would then generate the previous hash chain values at a minimal computational cost. This makes possible to store more authenticated data in the blockchain, as we show below.

\begin{table}[t ]
	\caption{Evaluation for sensor-aggregator protocol (average values for 5000 verifications)}
	\label{measurements-table}
	\begin{tabular}{|p{4cm}|c|c|c|c|}
		\hline
		& & & &     \\[-2.5ex]
		Hash Chain length $n$ \xdef\tempwidth{\the\linewidth} & $2^{20}$   & $2^{22}$   & $2^{24}$    & $2^{26}$     \\ \hline 
		\multicolumn{1}{|m{\tempwidth}|}{Sensor lifetime for 1sig/10sec (m: months, y: years)}           & 4 m   & 16 m   & 5 y    & 21  y   \\  
		\thickhline \thickhline
		Pebble Gen time (seconds)                & 1.62 & 6.49 & 24.57 & 95.33  \\ \hline
		\multicolumn{1}{|m{\tempwidth}|}{Verification time per signature (msec)}             & \multicolumn{4}{c|}{0.031}         \\ \thickhline \thickhline
		Signature size (bytes)     & \multicolumn{4}{c|}{64+ $|m|$}     \\ \hline
		\multicolumn{1}{|m{\tempwidth}|}{Total dynamic memory usage (bytes)}     &  1436    & 1520     & 1604      & 1678   \\ \hline
		\multicolumn{1}{|m{\tempwidth}|}{Pebble struct memory usage (bytes)}     & 840     &   924 & 1008     & 1082   \\ \hline
		Program memory usage (bytes)     & \multicolumn{4}{c|}{596}     \\ \hline
		\multicolumn{1}{|m{\tempwidth}|}{Signature computation time (msec)}   & 49.82     & 49.88   & 49.95     & 50.00      \\ \hline
		\multicolumn{1}{|m{\tempwidth}|}{Average total verification time per signature (msec)} & 64.15   & 64.25  & 64.26   & 64.32   \\ \hline
		Communication cost (msec)    & \multicolumn{4}{c|}{14.3}     \\ \hline
		 
	\end{tabular}
\iffull
\else
\vspace{-0.15in}
\fi
\end{table}

\subsection{Consensus Performance}
\label{measurements-consensus}

Considering the use-case scenario discussed in Section \ref{measurements-scenario}, we discuss the performance of our \sysname system as a whole. We show that the most important metric in the system is the transaction throughput which heavily depends on the ability of the SWaP sensors to transmit data in a group setting. Of course, the scalability of the system overall is also directly proportional to the number of system active participants it can support simultaneously.

\smallskip
\noindent \textit{Sensors.} Our measurements indicate that the aggregator - which is a relatively powerful device - is not the bottleneck in the protocol execution.  Based on the measurements in Table \ref{measurements-table}, we can safely assume that a single aggregator can verify over a thousand sensors' data being continuously broadcasted, since the signature computation time by a sensor is three (3) orders of magnitude larger than the verification time by an aggregator. This is still a pessimistic estimation, since we previously assumed that a sensor broadcasts (and signs) data every 10 seconds, which implies that the aggregator can accommodate even more sensors.

\smallskip
\noindent \textit{Orderers.} Since orderers only participate in the consensus protocol to sign blocks, we only need a few orderers such that our system remains resilient to attacks at the consensus level 
should a subset of orderers become compromised. Orderers can be strategically distributed over a geographical area to minimize the network latency between an aggregator and the ordering service, controlled by the main organization (which also controls the $\msp$). Evaluations performed in previous works have shown that by having 3 orderers, 3000 transactions/second can be easily achieved using the consensus protocol used in the current version of Hyperledger Fabric (with a potential of further improvement in a future adoption of BFT-SMART), and even considering up to 10 orderers in the system does not greatly affect its performance~\cite{DBLP:journals/corr/abs-1801-10228,DBLP:conf/dsn/SousaBV18}.

\smallskip
\noindent \textit{Aggregators.} 
The expected number of aggregators in the system depends on the use case as it is expected. As discussed in Section \ref{measurements-scenario}, where gateways play the role of \sysname aggregators, we consider a sensor/gateway ratio of 10:1 for our evaluation purposes.
To our knowledge, no evaluation of Hyperledger Fabric has ever been performed to consider such a great number of peers, which would require a great amount of resources to perform. However, by adopting the evaluation performed in ~\cite{DBLP:journals/corr/abs-1801-10228} which measured the throughput in terms of number of peers up to 100 (which as discussed, are the aggregators in our system), we can extrapolate this evaluation to the order of thousands, which shows that with the aid of a ``peer gossip" protocol, the system remains scalable if the peers are in the same approximate geographical area which implies low average network latency.

\smallskip
\noindent \textit{Blockchain operations.} As discussed, aggregators' role is to aggregate sensor data into blockchain transactions. Assuming that aggregators perform no ``lossy" operations (such as averaging techniques), they would just package many collected sensor data along with the respective signatures into a transaction which in turn would be submitted to the ordering service. If we assume as in \cite{DBLP:journals/corr/abs-1801-10228} a block size of 2MB, we can estimate how much signed sensor data a block can hold. 
Given the discussion in Section \ref{measurements-sign-verif}, a Hyperledger block could hold (at most) about 15800 signed sensor data using our hash-based scheme vs. 12700 using ECDSA.

\smallskip
\noindent \textit{Latency.} We also wish to estimate the time from a value being proposed by an aggregator until consensus has been reached on it (assuming the block contains a single transaction). Again we can adopt previous evaluations in Hyperledger Fabric~\cite{DBLP:journals/corr/abs-1801-10228}, which show an average of 0.5 sec for the complete process.
Finally, considering that the previous evaluations mentioned above were all preformed on the original Hyperledger Fabric (while our architecture requires a slight modification as discussed in Section \ref{mod:hyperledger}), for our purposes we assume that the expected performance of aggregators (which are essentially Hyperledger peers also having client application functionalities) is not affected by this additional functionality, since the main affecting factor that can potentially become a bottleneck for the scalability of the whole system is network latency and not computational power.

\begin{table}[t]
\resizebox{\columnwidth}{!}{%
\begin{tabular}{l|c|c|c|c|}
	\cline{2-5}
	& \multicolumn{2}{c|}{\sysname}                            & \multicolumn{2}{c|}{ECDSA}                                              \\ \hline
	\multicolumn{1}{|l|}{Message length} & \multicolumn{1}{l|}{Sensor Sign} & \multicolumn{1}{l|}{Aggr Vrfy} & \multicolumn{1}{l|}{Sensor Sign} & \multicolumn{1}{l|}{Aggr Vrfy} \\ \hline
	\multicolumn{1}{|l|}{50}             & 50.43                            & 0.0339                               & \multirow{5}{*}{4200}            & \multirow{5}{*}{42.55}               \\ \cline{1-3}
	\multicolumn{1}{|l|}{100}            & 53.47                            & 0.0349                               &                                  &                                      \\ \cline{1-3}
	\multicolumn{1}{|l|}{150}            & 56.40                            & 0.0357                               &                                  &                                      \\ \cline{1-3}
	\multicolumn{1}{|l|}{202}            & 59.33                            & 0.03687                               &                                  &                                      \\ \cline{1-3}
	\multicolumn{1}{|l|}{218}            & 60.06                            & 0.0369                              &                                  &                                      \\ \hline
	\multicolumn{1}{|l|}{Signature size} & \multicolumn{2}{c|}{32}                                                 & \multicolumn{2}{c|}{64}                                                 \\ \hline
\end{tabular}
}
	\caption{Signing and verification costs (in milliseconds) compared with message and signature sizes (in bytes). Note we assume hash-based signatures are aggregated as discussed in Section \ref{measurements-sign-verif}. Signer is ATmega328P microcontroller and verifier is RPi 3.} 
\label{sign-verify-costs}

\end{table}

\begin{figure}[t]
	\centering
	\resizebox{\figurewidth\textwidth}{!}{
		\input{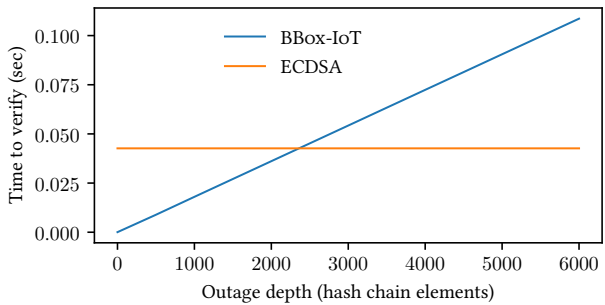}
	}
	\caption{Aggregator verification costs in network outages. \sysname is more expensive when more than about 2400 signature packets are lost.} 
	\label{hash-vs-ecdsa-outage}
\end{figure}

\definecolor{red2}{HTML}{D7191C}
\definecolor{orange2}{HTML}{FDAE61}
\definecolor{green2}{HTML}{ABDDA4}
\definecolor{blue2}{HTML}{2B83BA}
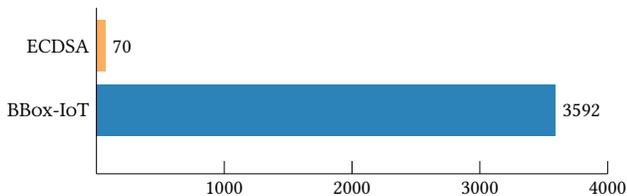
\begin{figure}[t]
	\centering
	\resizebox{\figurewidth\textwidth}{!}{
		\begin{tikzpicture}[x={(.002,0)}]
		\foreach  \l/\x/\c[count=\y] in {\sysname/3592/blue2, 
			ECDSA /70/orange2}
		{\node[left] at (0,\y) {\l};
			\fill[\c] (0,\y-.4) rectangle (\x,\y+.4);
			\node[right] at (\x, \y) {\x};}
		\draw (0,0) -- (4000,0);
		\foreach \x in {1000, 2000, ..., 4000}
		{\draw (\x,.2) -- (\x,0) node[below] {\x};}
		\draw (0,0) -- (0,2.6);
		\end{tikzpicture}
	}
	\caption{Number of signing operations for a 20mWh battery. 
	} 
	\label{energy-efficiency}
	\iffull
	\else
	\vspace{-0.1in}
	\fi
\end{figure}

\section{Related work}
\label{relwork}
We now discuss a number of works that connect IoT to the blockchain setting or works which build cryptographic primitives to optimize different parts of computation for resource-constrained IoT devices. Note that none of these works addresses the problem of authentication for extremely constrained (Class 0) devices.

\subsection{IoT and Blockchain}

 Shafagh et al.~\cite{Shafagh:2017:TBA:3140649.3140656} presented an architecture aiming to handle IoT data in a decentralized manner while achieving confidentiality, authenticity and integrity. This proposed system defines itself as ``IoT compatible" being append-only by a single writer and can be accessed by many readers, and consists of a layered design on top of an existing public blockchain to store access permissions and hash pointers for data, while storing the actual data off-chain using decentralized P2P storage techniques. Other approaches~\cite{8704309,8306880,8621042} also used a similar "layering" paradigm. While these approaches are simpler than ours, they ultimately rely heavily on the performance and properties of the underlying public blockchain and are not specifically tailored to handle resource-constrained IoT devices. 

Dorri, Kanhere, and  Jurdak~\cite{DBLP:journals/corr/DorriKJ16} considered a ``local" private blockchain maintained by a capable device, managed by the on-site owner and containing the local IoT device transactions. These lower-tier elements would be overlaid by a shared blockchain that can handle hashed data originating from the local blockchain and stored in a cloud storage service, and can enable access to local data. The above approach also offers confidentiality and integrity for submitted data and is suitable for resource-constrained IoT devices, however it is more complex than \sysname and requires managing and replicating data over several points in the system.

More recently, AlTawy and Gong~\cite{DBLP:journals/popets/AlTawyG19} presented a blockchain-based framework in the supply chain setting using RFIDs. This model considered blockchain smart contracts interacting with an overlay application on the RFID readers and a centralized server that handles membership credentials. This framework offers anonymity for the participating entities, which prove their membership in zero-knowledge, while their anonymity remains revocable by the server. It also provides confidentiality for its transactions and enforces a notion of ``forward secrecy" which enables future product owners in the supply chain to access its entire history. \sysname differs from the above work in several ways, since it is tailored to handle resource-constrained devices. Our work does not have confidentiality or anonymity as a main goal, although it can be added as an option using symmetric keys. We also do not require any smart contract functionality from the blockchain, and we operate exclusively in the permissioned setting.

IoTLogBlock \cite{DBLP:conf/lcn/ProfentzasAL19} shares a common goal with our work: enabling the participation of low-power devices in a distributed fashion, and similarly uses Hyperledger as a ``cloud service'' in a IoT setting. The crucial difference with our work, is that IoTLogBlock is evaluated on a Class 2 device using ECDSA signatures, which are far more expensive than our proposed hash-based signature and could not have been supported at all by a Class 0 device, while having much larger power consumption (Fig \ref{energy-efficiency}). Our proposed signature scheme is a key component for efficient implementations of blockchain-based systems in the IIoT setting.

Several more approaches have been presented which augmented an IoT infrastructure with a blockchain, focusing on providing two-factor authentication~\cite{8390280}, managing or improving communication among IoT devices~\cite{8029217,8378971}, implementing a trust management system in vehicular networks~\cite{8358773}, providing edge computing services~\cite{8436042}, data resiliency~\cite{8170858}, providing secure and private energy trade in a smart-grid environment~\cite{7589035} and implementing a hierarchical blockhain storage for efficient industrial IoT infrastructures \cite{DBLP:conf/blockchain2/WangSNH19} and 
all of which are orthogonal to our work. We point the reader to~\cite{DBLP:journals/comsur/AliVPDAR19,DBLP:journals/iotj/FerragDMDMJ19} for extensive reviews on the related literature. 

\iffull
\subsection{Hash-based Signatures}
\label{prel:onetime-sigs}
Early works such as Lamport's One-Time Signatures (OTS)~\cite{Lamport79} allowed the use of a hash function to construct a signature scheme. Apart from being one-time however, this scheme suffered from large key sizes. Utilizing tree-based structures such as Merkle trees \cite{C:Merkle87}, enabled to sign many times while keeping a constant-sized public key as the Merkle root. Winternitz OTS and later WOTS+~\cite{EPRINT:BDEHR11}\cite{AFRICACRYPT:Hulsing13} introduced a way of trading space for computation for the underlying OTS, by signing messages in groups. XMSS \cite{PQCRYPTO:BucDahHul11} further optimized the Merkle tree construction using Winternitz OTS as an underlying OTS. Other works such as HORS~\cite{ACISP:ReyRey02} enabled signing more than once, and more recently SPHINCS and SPHINCS+ ~\cite{EC:BHHLNP15,CCS:BHKNRS19} enabled signing without the need to track state.
Using HORS~\cite{ACISP:ReyRey02} as a primitive combined with a hash chain, Time Valid One-Time Signature (TV-HORS)~ \cite{DBLP:conf/infocom/WangKHN09} improves in signing and verification computational efficiency, but assuming ``loose'' time synchronization between the sender and the verifier.
All of the above scheme families while only involving hash-based operations, still incur either large computational and/or space costs, and cannot be implemented in Class 0 resource-constrained devices we consider. Follow-up work exists for implementing SPHINCS on resource-constrained devices~\cite{PKC:HulRijSch16} which we discuss later in this section and compare in Appendix \ref{apdx:modsphincs}.

The TESLA Broadcast Message Authentication Protocol~\cite{848446,Perrig02thetesla} follows a ``one-way'' chain-based approach for constructing a hash-based message authentication scheme. Based on a ``seed'' value, it generates a one-way chain of $n$ keys, which elements are used to generate temporal MAC keys for specified time intervals. The protocol then discloses each chain element with some time delay $\Delta$, then authenticity can be determined based on the validity of the element in the chain as well as the disclosure time.
The ``pebbling'' algorithms~ \cite{jakobsson2002fractal,RSA:YSEL09} enable logarithmic storage and computational costs as discussed in Section \ref{our-primitive}.
Its main drawback however is that it also requires ``loose'' time synchronization between the sender and the receiver for distinguishing valid keys. In an IoT setting this would require the frequent execution of an interactive synchronization protocol, since IoT devices are prone to clock drifting \cite{DBLP:journals/sensors/Tirado-AndresRA19,DBLP:journals/tii/ElstsFDOPC18}. Also we assume in Section \ref{properties} that IoT devices function in a broadcast-only mode, which would not allow the execution of such interactive protocol in the first place. Furthermore, TESLA introduces a ``key disclosure delay'' which might be problematic in certain IoT applications, and gives up the non-repudiation property of digital signatures.

Several modifications and upgrades to the TESLA protocol have been proposed, with most of them maintaining its ``key disclosure delay'' approach which is also associated with the loose time synchronization requirement \cite{NDSS:LiuNin03,ACNS:HuJakPer05}. A notable paradigm is the ``hierarchical'' (or two-dimensional) one-way chain structure, where the elements of a ``primary'' hash chain serve as seeds for ``secondary'' chains in order to reduce communication costs. \cite{ACNS:HuJakPer05} includes several such proposals.
For instance, its Sandwich-chain uses two separate one-way chains. The first one-way chain is used as a ``primary'' chain, which generates intermediate ``secondary'' chains using the elements of the second one-way chain as salts. However to maintain efficiency, it still assumes some weak time synchronicity between the signer and the verifier by disclosing each element of the ``primary'' chain with some time delay (else the verifier in case of a network outage would have to recompute all the previous secondary chains as well which would defeat its efficiency gains). More importantly however, this construction has much larger storage requirements than ours.
In the same work, the Comb Skipchain construction is asymptotically more efficient in signing costs than our scheme and does not require time synchronicity, but has worse concrete storage requirements which are prohibitive for low-end IoT devices, and still suffers from delayed  verification. This work includes other interesting modifications such as the  ``light'' chains where the secondary chains are generated using a lower security parameter
and a standard one-dimensional TESLA variant which does not require a MAC.

\else

\subsection{Hash-based Signatures}
\label{prel:onetime-sigs}
Lamport's One-Time Signatures (OTS)~\cite{Lamport79} was the first scheme to allow the use of a hash function to construct a signature scheme. Then, Winternitz OTS and  WOTS+~\cite{EPRINT:BDEHR11}\cite{AFRICACRYPT:Hulsing13}enabled a time-memory tradeoff by signing messages in groups, used in turn by XMSS \cite{PQCRYPTO:BucDahHul11} in a Merkle tree construction. Other works such as HORS~\cite{ACISP:ReyRey02} enabled signing more than once, and more recently SPHINCS and SPHINCS+ ~\cite{EC:BHHLNP15,CCS:BHKNRS19} enabled signing without the need to track state.
Using HORS~\cite{ACISP:ReyRey02} as a primitive combined with a hash chain, Time Valid One-Time Signature (TV-HORS)~ \cite{DBLP:conf/infocom/WangKHN09} improves in signing and verification computational efficiency, but assuming ``loose'' time synchronization between the sender and the verifier.All of the aforementioned schemes, while only involving hash-based operations, still incur large computational and/or space costs and cannot be implemented in Class 0 resource-constrained devices we consider.

TESLA~\cite{848446,Perrig02thetesla} constructs a ``one-way'' hash chain to generate temporal MAC keys for specified time intervals, disclosing each chain element with some time delay $\Delta$.
While ``pebbling'' algorithms~ \cite{jakobsson2002fractal,RSA:YSEL09} enable logarithmic storage and computational costs as discussed in Section \ref{our-primitive}, it requires ``loose'' time synchronization between the sender and the receiver for distinguishing valid keys. In an IIoT setting this would require the frequent execution of an interactive synchronization protocol since such devices are prone to clock drifting \cite{DBLP:journals/sensors/Tirado-AndresRA19,DBLP:journals/tii/ElstsFDOPC18}.
Several modifications and upgrades to TESLA have been proposed, but most of them still require time synchronization \cite{NDSS:LiuNin03,ACNS:HuJakPer05}. 
\fi

\subsection{Cryptographic Operations in IoT}  In the context of improving cryptographic operations in the IoT setting, Ozmen and Yavuz~\cite{DBLP:conf/ccs/OzmenY17} focused on optimizing public key cryptography for resource-constrained devices. This work exploited techniques in Elliptic Curve scalar multiplication optimized for such devices and presented practical evaluations of their scheme on a low-end device. Even though the device used in this work is can be classified as a Class 1 or Class 2 device, our construction  signing is more efficient both in terms of computation cost and storage by at least an order of magnitude.

\iffull As discussed above, \fi H{\"u}lsing, Rijneveld and Schwabe \cite{PKC:HulRijSch16} showed a practical evaluation of the SPHINCS hash-based signature scheme \cite{EC:BHHLNP15} on a Class 2 device. At first glance this implementation could also serve our purposes, however our proposed construction, while stateful, is much cheaper in terms of runtime, storage and communication costs, without such additional assumptions. \iffull We directly compare with their scheme  in Appendix \ref{apdx:modsphincs}.\fi

Kumar et al.~\cite{DBLP:journals/corr/abs-1905-13369} propose an integrated confidentiality and integrity solution for large-scale IoT systems, which relies on an identity-based encryption scheme that can distribute keys in a hierarchical manner. This solution also uses similar techniques to our work for signature optimization for resource-constrained devices, however, it requires synchronicity between the system participants. Portunes~\cite{DBLP:conf/smartgridcomm/LiDN14} is tailored for preserving privacy (which is not within our main goals in our setting), and requires multiple rounds of communication (while we consider a “broadcast-only” setting)

\iffull
Wander et al.~\cite{Wander:2005:EAP:1048930.1049786} quantified the energy costs of RSA and Elliptic Curve operations as public key cryptography algorithms in resource-constrained devices. In a similar context, Potlapally et al.~\cite{1231830} performed a comprehensive analysis of several cryptographic algorithms for battery-powered embedded systems. However as discussed in Section  \ref{prel:onetime-sigs}, we consider hash-based algorithms that are lighter and more efficient.\fi

Finally we mention an extensive IoT authentication survey \cite{DBLP:journals/sensors/El-hajjFCS19}. In this work, our authentication scheme is comparable to \cite{DBLP:conf/esweek/BamasagY15} which utilizes hashing for one-way authentication in a distributed architecture, however our scheme is more storage-efficient, suited for low-SWaP (Class 0) sensors.

\section{Conclusions}
\label{conclusions}
In this paper we designed and implemented \sysname, a block-chain
inspired approach for Industrial IoT sensors aiming at offering a
transparent and immutable system for sensing and control information
exchanged between IIoT sensors and aggregators. Our approach
guarantees blockchain-derived properties to even low-Size Weight and
Power (SWaP) devices. Moreover, \sysname acts as a "black-box" that
empowers the operators of any IoT system to detect data and sensor
tampering ferreting out attacks against even SWaP devices. We posit
that enabling data auditing and security at the lowest sensing level
will be highly beneficial to critical infrastructure environments with
sensors from multiple vendors.

Finally, we envision that our approach will be implemented during the
sensor manufacturing stage: having industrial sensors shipped with
pre-computed pebbles and their key material labeled using QR-code on
the sensor body will allow for a seamless and practical deployment of
\sysname.

\iffull
\begin{acks}
	Foteini Baldimtsi and Panagiotis Chatzigiannis were supported by NSA 204761.
\end{acks}
\fi

\bibliographystyle{ACM-Reference-Format}
\bibliography{mybibliography,abbrev3,crypto,iot}

\iffull

\appendix
\section{On MACs for sensor authentication}
\label{constr:notmacs}
One might suggest using MAC authentication in our scheme instead of one-time hash based signatures, which might be slightly more efficient in terms of computation cost for generating a signature, are simpler in usage and do not expire. The question of whether its preferable using symmetric primitives in resource-constrained IoT devices instead of public key cryptography has been raised in academic works~\cite{DBLP:conf/ccs/OzmenY17}, and several motivations to provide efficient public key cryptography techniques in such devices were outlined, most of which are also applicable to our system as follows.

Firstly, signatures provide non-repudiation, which as discussed previously is a needed security property \ref{sec-nonrepud}. Although a way to achieve non-repudiation through MACs could be to use a separate MAC key for each sensor, each key would need to be shared with each group aggregator separately since they should be all able to verify data from all sensors in the group. This would increase the attack surface since an attacker compromising any aggregator could also send bogus data for all sensors. Also considering that aggregators might have to verify data from a great number of sensors, our hash-based verification cost (which involves one hash operation) is cheaper than one MAC operation. Although for sensors a MAC operation is cheaper than a hash-based signature, as we show in section \ref{measurements} a hash-based signature which involves a few hashes and a Quicksort operation is still relatively efficient even for the weakest types of sensors.

Secondly, our chain  hash-based scheme has a built-in ``replay protection" against an attacker, since that signature is by definition valid for one time only. A MAC scheme would require extra layers of protection (nonces and/or timers) against replay attacks.

Lastly, by using our  hash-based signature scheme we enable public verifiability of signed sensor data on the blockchain, even by entities not authorized to participate in the system.

\section{Chain-Based Hash Signatures}
\label{apdx:chainSign}

\subsection{Digital Signatures.}
A digital signature scheme consists of the following algorithms \cite{Katz:2014:IMC:2700550}:
\begin{itemize}
	\item $\signgen{\publickey{}{}}{\secretkey{}{}}$: Outputs a pair of keys $(\publickey{}{},\secretkey{}{})$.
	\item $\sign{\secretkey{}{}}{m}{\sigma}$: Takes as input a private key $\secretkey{}{}$ and a message $m$ and outputs a signature $\sigma$.
	\item $\svrfy{\publickey{}{}}{m}{\sigma}$: Takes as input a public key $\publickey{}{}$, a message $m$ and a signature $\sigma$, and outputs a bit $b$ where $b=1$ indicates successful verification.
\end{itemize}
A digital signature is considered secure if an adversary $\adv$ cannot forge a signature on a message even after adaptively receiving signatures on messages of its choice. To formalize the security definition we first describe the following experiment $\mathsf{SigForge(\secpar)}$:
\begin{enumerate}
	\item $\signgen{\publickey{}{}}{\secretkey{}{}}$
	\item $\adv$ on input $(\publickey{}{})$ queries the signing oracle polynomial number of times $q$. Let $Q: [m_{i},\sigma_{i}]_{i=1}^{q}$ be the set of all such queries.
	\item $\adv$ outputs $(m^{*},\sigma^{*})$.
	\item $\adv$ wins if $\mathsf{SVrfy}(m^{*},\sigma^{*})$ = 1 where $m^{*} \notin [m_{i}]_{i=1}^{q}$ and $\mathsf{SigForge}$ outputs ``1", else it outputs ``0".
\end{enumerate}
\begin{defn}
	A digital signature scheme is existentially unforgeable under an adaptive chosen-message attack, if for all PPT $\adv$, $\pr{\mathsf{SigForge(\secpar) = 1}}$ is negligible in $\secpar$.
\end{defn}

\subsection{One-time signatures} A digital signature scheme that can be used to sign only one message per key pair is called a one-time signature (OTS) scheme.

\begin{defn}
	\label{prel:onetime-sigs-defn}
	A one-time digital signature scheme is existentially unforgeable under an adaptive chosen-message attack, if for all PPT $\adv$ and for $q\leq 1$, $\pr{\mathsf{SigForge(\secpar) = 1}}$ is negligible in $\secpar$.
\end{defn}

\subsection{Hash functions.}
An (unkeyed) hash function  $y:=h(m)$ on input of a message $m$ outputs a digest $y$.  
A cryptographic hash function is considered secure if the probability to find collisions is negligible (i.e. it is \emph{collision resistant}). More formally, we consider the following experiment $\mathsf{HashColl}$\cite{Katz:2014:IMC:2700550}:
\begin{enumerate}
	\item $\adv$ picks values $x, x' \in \{0,1\}^{*}$ s.t. $x \neq x'$.
	\item $\adv$ wins if $h(x) = h(x')$ and $\mathsf{HashColl}$ outputs ``1".
\end{enumerate}

\begin{defn}
	\label{prel:hash-secdef}
	A hash function $h(): \{0,1\}^* \rightarrow \{0,1\}^\secpar$ is collision resistant if for all PPT $\adv$, $\pr{\mathsf{HashColl = 1}}$ is negligible.
\end{defn}

A weaker notion for security of a hash function is preimage-resistance. We consider the following experiment $\mathsf{PreIm}(\secpar,y)$:
\begin{enumerate}
	\item $\adv$ is given $y \in \{0,1\}^\secpar$ 
	\item $\adv$ outputs $x$.
	\item $\adv$ wins if $h(x)=y$. If $\adv$ wins $\mathsf{PreIm}$ outputs ``1", else it outputs ``0".
\end{enumerate}
\begin{defn}
	\label{prel:hash-preim}
	A hash function $h(): \{0,1\}^* \rightarrow \{0,1\}^\secpar$ is preimage resistant if \space $\forall$ ppt $\adv$ and $\forall y \in \{0,1\}^\secpar$,
	$\pr{\mathsf{PreIm}(\secpar,y) = 1}$ is negligible in $\secpar$.
\end{defn}
\begin{corr}
	A collision resistant hash function is also preimage resistant.
\end{corr}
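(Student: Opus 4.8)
The plan is to establish the contrapositive: assuming $h$ is \emph{not} preimage resistant, I will build an adversary that finds collisions, contradicting collision resistance (Definition~\ref{prel:hash-secdef}). The only structural feature I will exploit is that $h$ is compressing, which holds here because its domain $\{0,1\}^{*}$ is strictly larger than its range $\{0,1\}^{\secpar}$. This is exactly the property that separates the two notions: for a length-preserving bijection collision resistance is vacuous while preimage resistance can still fail, so any correct proof must use compression somewhere.

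Concretely, suppose a PPT adversary $\adv$ wins $\mathsf{PreIm}$ with non-negligible probability $\epsilon$ on targets of the form $y=h(x)$. I would construct a collision finder $\BBB$ for $\mathsf{HashColl}$ as follows. First, $\BBB$ samples $x$ uniformly from a domain strictly larger than the range, e.g.\ $D=\{0,1\}^{\secpar+1}$ (legitimate, since $h$ accepts arbitrary-length strings). Then $\BBB$ sets $y:=h(x)$, runs $\adv(y)$, and obtains a candidate $x'$. Whenever $h(x')=y$ and $x'\neq x$, the pair $(x,x')$ is a genuine collision and $\BBB$ outputs it; otherwise $\BBB$ aborts.

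The heart of the argument — and the step I expect to be the main obstacle — is lower bounding $\pr{x'\neq x}$, i.e.\ ruling out that $\adv$ simply returns the very preimage $\BBB$ started from. This is resolved by an information-theoretic counting argument that again leans on compression. Call $x\in D$ \emph{lonely} if it is the unique preimage in $D$ of $h(x)$; since the number of lonely points is at most the number of distinct images, which is at most $2^{\secpar}$, a uniform $x\in D$ is lonely with probability at most $2^{\secpar}/2^{\secpar+1}=1/2$. Conditioned on $x$ having a sibling, $\BBB$'s choice is uniform among the $\geq 2$ preimages of $y$, whereas $\adv$'s entire view is the value $y$ and is therefore independent of which preimage was drawn; hence $\adv$ reproduces $\BBB$'s particular $x$ with probability at most $1/2$. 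Combining these with $\adv$'s success probability shows $\BBB$ finds a collision with probability at least roughly $\epsilon/4$, which is non-negligible and contradicts collision resistance.

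Finally, I would note the quantifier subtlety in the stated $\mathsf{PreIm}$ experiment: the clean reduction above treats the target as $y=h(x)$ for a freshly sampled $x$, which is the standard formulation of preimage resistance. The two essential ingredients — compression of $h$, and the independence of $\adv$'s output from $\BBB$'s randomly chosen preimage — are unaffected by this choice, and together they are precisely what make collision resistance the strictly stronger notion claimed by the corollary.
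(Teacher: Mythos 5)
The paper never proves this corollary --- it is stated bare after Definitions~\ref{prel:hash-secdef} and~\ref{prel:hash-preim} --- so there is no in-paper argument to compare against; I am judging your reduction on its own terms. Your overall strategy is the standard and correct one: exploit compression, sample $x$ from a domain larger than the range, hand $y=h(x)$ to the inverter, and argue the returned preimage differs from $x$ with noticeable probability. Your handling of the paper's awkward ``$\forall y$'' quantifier is also reasonable.

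However, the final quantitative step has a genuine gap. You bound $\pr{x\text{ lonely}}\le 1/2$ and $\pr{x'=x\mid x\text{ not lonely}}\le 1/2$ and then multiply these against $\epsilon$ to claim success probability roughly $\epsilon/4$, but the event ``$\adv$ inverts'' is not independent of the event ``$x$ is not lonely.'' Nothing prevents an inverter whose success is concentrated entirely on targets $y$ with a \emph{unique} preimage in $D=\{0,1\}^{\secpar+1}$ and which always returns that unique preimage; such an $\adv$ can have $\epsilon$ as large as $1/2$ while your $\BBB$ never outputs a collision. The correct accounting for this reduction is $\pr{\mathrm{coll}} \ge \pr{h(x')=y} - \pr{x'=x} \ge \epsilon - \lvert \mathrm{Im}(h|_D)\rvert/\lvert D\rvert \ge \epsilon - 2^{\secpar}/\lvert D\rvert$, which with your choice $\lvert D\rvert = 2^{\secpar+1}$ degrades to $\epsilon - 1/2$ --- vacuous for every non-negligible $\epsilon \le 1/2$. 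The fix is easy and uses exactly the freedom you already noted: since $h$ accepts arbitrary-length inputs, sample $x$ uniformly from $\{0,1\}^{2\secpar}$ (or any domain super-polynomially larger than the range), so that $\pr{x'=x}\le 2^{-\secpar}$ is negligible and $\pr{\mathrm{coll}}\ge \epsilon - 2^{-\secpar}$ is non-negligible whenever $\epsilon$ is. With that one change of parameter the argument goes through.
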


\subsection{Definition and Security proof}
\label{apdx:proofs}

We first define the API of a chain based signature for a fixed number of messages $n$.
\begin{itemize}
	\item $\otkeygen{\publickey{}{}}{\secretkey{n}{}}{s_{0}}{n}$: Outputs a pair of keys $\publickey{}{},\secretkey{n}{}$ and an initial state $s_{0}$, where $\publickey{}{} = h^{n}(\secretkey{n}{})$ and $h()$ is a collision resistant hash function. 
	\item $\otsign{\secretkey{i}{}}{\secretkey{i-1}{}}{s_{i}}{s_{i-1}}{m}{\sigma}$: Takes as input the system state $s_{i-1}$, a private key  $\secretkey{i-1}{}$ and a message $m$, generates a signature $\sigma$ and updates the signer's private key to $\secretkey{i}{}$ where $\secretkey{i-1}{} = h(\secretkey{i}{})$ and his state to $s_{i}$ where $i \leq n$.
	\item $\otverify{\publickey{}{}}{m}{\sigma}$: Takes as input a public key $\publickey{}{}$, a message $m$ and a signature $\sigma$, and outputs a bit $b$ where $b=1$ indicates successful verification. 
\end{itemize}

To formalize security for chain-based signatures with length of chain $n$, we describe the following experiment $\mathsf{OTSigForge}(\secpar,n)$:
\begin{enumerate}
	\item $\otkeygen{\publickey{}{}}{\secretkey{n}{}}{s_{0}}{n}$
	\item $\adv$ on input $(\publickey{}{},n)$ makes up to $q \leq n$ queries to the signing oracle. Let $Q: [m_{i},\sigma_{i}]_{i=1}^{q}$ the set of all such queries where $m_{i}$ is the queried message and $\sigma_{i}$ is the signature returned for $m_{i}$.
	\item $\adv$ outputs $(m_{q+1},\sigma_{q+1})$.
	\item $\adv$ wins if $\mathsf{OTVerify}(\publickey{}{},m_{q+1},\sigma_{q+1}):= 1$ and  $h^{i}(\secretkey{i}{}) \neq \publickey{}{}$ $\forall i \leq q$ where $\mathsf{OTSigForge}$ outputs ``1", else it outputs "0".
\end{enumerate}

Note in the above experiment by $h^{i}(\secretkey{i}{}) \neq \publickey{}{}$ $\forall i \leq q$ we restrict $\adv$ from winning the game by reusing a secret key $\secretkey{i}{}$ existing in the chain up to distance $q$ from the public key $\publickey{}{}$.

\begin{defn}
	\label{defn:OTS}
	A chain-based one-time digital signature scheme is existentially unforgeable under an adaptive chosen-message attack, if \space $\forall$ ppt $\adv$, $\pr{\mathsf{OTSigForge}(\secpar,n) = 1}$ is negligible in $\secpar$.
\end{defn}

Given the formal definition above we now prove the security of Construction \ref{prel:onetime-sigs-constr}.

\begin{thm}
	Let $h: \{0,1\}^{*} \rightarrow \{0,1\}^{\secpar}$ be a  preimage resistant hash function. 
	Then Construction \ref{prel:onetime-sigs-constr} is an existentially unforgeable chain-based one-time signature scheme. 
\end{thm}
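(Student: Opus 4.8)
The plan is to reduce existential unforgeability of Construction~\ref{prel:onetime-sigs-constr} to the preimage resistance of $h$ (Definition~\ref{prel:hash-preim}): assume a PPT adversary $\adv$ that wins $\mathsf{OTSigForge}(\secpar,n)$ (Definition~\ref{defn:OTS}) with non-negligible probability, and build a reduction $\BBB$ that, given a challenge $y\in\{0,1\}^{\secpar}$, outputs $x$ with $h(x)=y$. The observation that shapes everything is that the message-binding half of a signature, $\sigma_{1}=h(m\|h(k_{i}))$, can be recomputed by anyone holding the key half $k_{i}$; hence all unforgeability must come from $\adv$'s ability to produce a valid \emph{key} $k^{*}=\sigma_{2}$, and the whole argument can be reduced to that key component.

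Next I would extract the combinatorial core of a winning forgery. Writing the honest chain as $k_{0}=h(k_{1})=\dots=h^{n}(k_{n})$ with public key $k_{0}$, a forgery with key $k^{*}$ passes $\otverify{k_{0},n}{m_{q+1}}{\sigma_{q+1}}$ only if $h^{j}(k^{*})=k_{0}$ for some $j<n$, while the winning (``freshness'') condition $h^{i}(k^{*})\neq k_{0}$ for all $i\le q$ forces the \emph{minimal} such $j^{*}$ to satisfy $q<j^{*}<n$. Thus $\adv$ produces a key whose chain reaches $k_{0}$ strictly deeper than any key it received from the oracle (which only ever discloses $k_{1},\dots,k_{q}$). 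Tracing $k^{*},h(k^{*}),\dots,h^{j^{*}}(k^{*})=k_{0}$ against the honest chain, it must join the honest chain at some depth $M$; the value at the join point is then a preimage under $h$ of an honest chain value $k_{M}$ that $\BBB$ can arrange to be its challenge.

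The reduction itself guesses the join position. $\BBB$ samples $t\pick{\$}\{1,\dots,n\}$, plants the challenge by setting $k_{t-1}:=y$, and computes the public-key side honestly as $k_{t-2}=h(y),\dots,k_{0}=h^{t-1}(y)$, publishing $k_{0}$. It answers the $i$-th signing query with $k_{i}$ for $i\le t-1$ (all derivable from $y$) and aborts if $\adv$ requests position $\ge t$; crucially, $\BBB$ never needs a preimage of $y$ to simulate. On a valid fresh forgery $k^{*}$, if the guess was correct (the forged chain passes through the planted $k_{t-1}=y$, i.e.\ $t\in\{j^{*},M+1\}$), $\BBB$ reads off $x=h^{j^{*}-t}(k^{*})$ and checks $h(x)=y$. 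Since $t$ is uniform over $n$ positions, the correct position is hit with probability $\ge 1/n$, so $\BBB$ inverts $y$ with probability at least $\Adv_{\adv}/n$, which is non-negligible whenever $n=\mathrm{poly}(\secpar)$ --- contradicting Definition~\ref{prel:hash-preim}.

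The main obstacle I anticipate is the chain-\emph{merging} analysis of the second step: I must show a winning forgery genuinely forces a fresh preimage at a position $\BBB$ can embed, rather than a collision buried inside the already-revealed prefix $k_{0},\dots,k_{q}$. The clean case is $k^{*}=k_{j^{*}}$, where $k^{*}$ is literally an undisclosed honest key and $h(k^{*})=k_{j^{*}-1}$ is a preimage of a value the simulator never revealed. The delicate case is when $\adv$'s chain merges at a shallower depth $M<j^{*}$, yielding $z\neq k_{M+1}$ with $h(z)=k_{M}$: here I would need to argue that $M\ge q$, so that the challenge can be planted at $k_{M}$ while every query $\le q<t$ remains answerable, invoking freshness together with the fact that each signing query only discloses values on the first-$q$ prefix. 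Pinning down this depth bound --- and confirming that any extracted $z$ is always a valid preimage of the planted challenge even when it differs from the honest preimage $k_{M+1}$ --- is exactly where the subtlety lies, since a merge strictly inside the revealed prefix would correspond to a second-preimage event that must be shown either impossible under the honest-chain distribution or itself reducible to inverting $h$.
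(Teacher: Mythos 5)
Your overall strategy matches the paper's: a reduction to preimage resistance that pays a factor of $1/n$ for guessing where along the chain the forgery will ``land.'' The execution differs in a way that is worth noting. The paper's reduction plants the challenge $y$ at the \emph{seed} (it builds the chain as $y, h(y),\dots,h^n(y)$ and publishes $h^n(y)$), answers every query honestly, and only extracts a preimage in the event that $\adv$ makes exactly $q=n$ queries before forging; all forgeries with $q<n$ are discarded, and the $1/n$ loss is justified by ``assuming a uniform probability distribution of the number of queries $q$'' --- an assumption about adversarial behaviour that is not warranted, since $q$ is chosen by $\adv$. Your version, which plants $y$ at a uniformly random position $t$ and aborts on queries beyond $t-1$, is the standard and strictly cleaner way to obtain the $1/n$ bound: the guess is independent of $\adv$'s strategy, so the $\Adv_{\adv}/n$ lower bound holds for every adversary. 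In that respect your reduction repairs a real weakness in the paper's argument rather than merely restating it.

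The chain-merging issue you flag in your last paragraph is a genuine subtlety, and you should know that the paper does not resolve it either: it simply asserts that the key half $\sigma^{B}$ of the forgery is a preimage of $y$, which presupposes that the forged chain rejoins the honest chain exactly one step below the planted value. Your analysis of the delicate case is essentially right. If the forged chain merges at depth $M\ge q$, the extracted value is a valid preimage of whatever sits at position $M$, and your reduction succeeds whenever $t=M+1$, which is consistent with simulating all $q\le t-1$ queries. If instead the merge occurs at $M<q$, the adversary has produced a second preimage of a value $k_{M}$ for which it already holds the honest preimage $k_{M+1}$; a reduction that planted $y$ there could not have answered the queries, so this event cannot be charged to preimage resistance at all and must instead be bounded by second-preimage (or collision) resistance of $h$. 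Strictly speaking, then, the theorem as stated --- security from preimage resistance alone --- is not established by either argument without either strengthening the hash assumption or showing the shallow-merge event cannot contribute to a winning forgery under the game's freshness condition. Making that case distinction explicit, as you propose, would yield a proof that is more honest than the one in the paper.
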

\begin{proof}
	Let $\adv$ be an adversary who wins the $\mathsf{OTSigForge}$ game described in Section \ref{prel:onetime-sigs} and therefore can forge signatures using the above scheme with non-negligible probability $p(\secpar)$. That is, $\exists \ \adv$ which after performing $q$ queries $\{m_{i}\}_{i=1}^{q}$ where $q \leq n$, can output a signature $\sigma_{q+1}$ for a message $m_{q+1}$ where $\mathsf{OTVerify}(\publickey{}{},m_{q+1},\sigma_{q+1})$ = 1 and $h^{i}(\secretkey{i}{}) \neq \publickey{}{}$ $\forall i \leq q$. 
	
	Then, an algorithm $\mathcal{B}$ running the $\mathsf{PreIm}$ experiment would use $\adv$ to break preimage resistance of $h$ as follows: On input $(\secpar,y)$, $\mathcal{B}$ would generate a hash chain of length $n$ with seed $y$ as $(y, h(y), ... ,h^{n}(y))$ and forward  $(h^{n}(y),n)$ to $\adv$. 
	Then $\adv$ makes up to $q \leq n$ queries to $\mathcal{B}$. When $\adv$ queries for $m_{i}$ (where $i$ denotes the query number), $\mathcal{B}$ returns $\sigma_{i} = h(m_{i}||h^{n-i+1}(y))||h^{n-i}(y)$ to $\adv$. If $q = n$ and $\adv$ does not output a forgery, $\mathcal{B}$ returns $\bot$ and starts over.  
	If $\adv$ eventually outputs a forgery $(m^{q+1},\sigma^{q+1})$ to $\mathcal{B}$ and $q < n$,  then $\mathcal{B}$ returns $\bot$ as output of $\mathsf{PreIm}$ experiment and starts over, else if $q=n$, $\mathcal{B}$ would parse $\sigma^{n+1} = \sigma^{A}||\sigma^{B}$ and return $\sigma^{B}$. Assuming a uniform probability distribution of the number of queries $q$, $\pr{\mathsf{PreIm}(\secpar,y) = 1} = \frac{\pr{\mathsf{OTSigForge}(\secpar,n) = 1}}{n}  = \frac{p(\secpar)}{n}$ which is non-negligible.
	
\end{proof}

\subsection{Evaluation comparison with modified SPHINCS}
\label{apdx:modsphincs}
As discussed in section \ref{prel:onetime-sigs}, the modified SPHINCS scheme tailored for resource-constrained devices~\cite{PKC:HulRijSch16} could be a candidate scheme for our system. Here we make a direct comparison between modified SPHINCS and our proposed scheme for our system's purposes. 

Assuming a hash chain length of $2^{26}$ elements (which as discussed is only exhausted after 21 years assuming generation of one signature every 10 seconds), a signature generation only requires 27 hashing operations in the worst case, which according to our measurements on a 8-bit 16Mhz CPU Arduino device outlined in Section \ref{measurements-sign-verif}, would only need 50 ms on average. 
On the other hand, modified SPHINCS' evaluation performed on a resource-constrained device (32-bit 32Mhz Cortex M3 which is more powerful than our Arduino Uno R3) needs 22.81 seconds for signature generation. Also our signature size (excluding the payload) is only 64 bytes for the signature and the program storage requirement 1082 bytes, while modified SPHINCS generates a 41KB signature, streamed serially. Our only additional requirement is an initial precomputation phase using a powerful device, which will have to pre-compute the $2^{26}$ hash chain elements and then send the ``pebbles" to the resource-constrained device.

\subsection{Collision probability analysis}
\label{measurements-coll-prob}

Although we assumed a collision resistant hash function in our hash-based signature construction, given the length of the hash chain (typical length $2^{26}$) there is an increased likelihood of a collision along that chain through the birthday paradox (especially for lower levels of security where the output size of the hash function is small), which would result in ``cycles" of hashes. If such cycles occur, an adversary could then trivially break the security of our scheme and sign bogus sensor data.

Assume a hash chain of length $2^{n}$ and a security parameter $\secpar$. From the birthday paradox, the probability of a collision on the hash chain is approximated by $p = 1 - e^{\frac{-2^{n}(2^{n}-1)}{2^{\secpar+1}}}$. In Figure \ref{coll_prob_secpar} we show that given a chain length of $2^{26}$ as previously discussed, the output size of the hash function $h()$ should be at least 64, and SHA256 which we used in our evaluations satisfies these requirements.

Nevertheless, if birthday attacks become an issue for a small security parameter, we can apply the technique from  \cite{ACNS:HuJakPer05} where the chain index is used as salt to prevent such attacks for a small overhead in cost. However since we show that the birthday attack is negligible, we prefer to keep the costs as low as possible.

\begin{figure}[]
	\centering
	\resizebox{\figurewidth\textwidth}{!}{
		\input{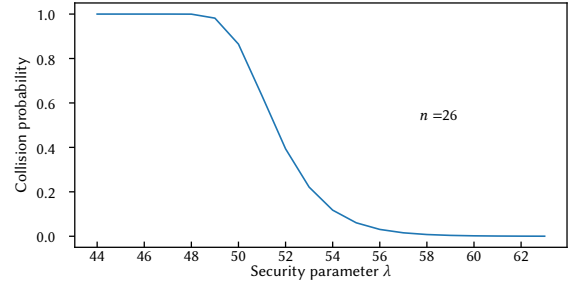}
	}
	\caption{Collision probability for hash chain length $2^{26}$.} 
	\label{coll_prob_secpar}
\end{figure}

\section{Consensus}
\label{apdx:consensus}
\subsection{Definitions}

\begin{defn}
	\label{prel:cons_def}
	Let parties $[P_{i}]_{i=1}^{n}$, each having a view of the blockchain $\blockchain(i)$, and receive a common sequence of messages in rounds $[1..j..]$
	A protocol solves the ledger consensus problem if the following properties hold:
	\begin{enumerate}[label=\roman*.]
		\item Consistency: An honest node's view of the blockchain on some round $j$ is a prefix of an honest node's view of the blockchain on some round $j+\ell, \ell>0$, or $\blockchain(i)_{j}||\block{j+1}||...||\block{j+\ell} = \allowbreak\blockchain(i')_{j+\ell},\allowbreak \forall P_{i},P_{i'},j,\ell$.
		\item Liveness: An honest party $P_{i}$ on input of an operation (or transaction) $\mathsf{tx}$, after a certain number of rounds  will output a view of the blockchain $\blockchain(i)$ that includes $\mathsf{tx}$.
	\end{enumerate}
\end{defn}

The protocol can be augmented with the existence of a $\trustedparty{}$ assigning membership credentials (which requires an additional trusted setup phase) resulting in an \emph{Authenticated} ledger consensus protocol~\cite{Lamport:1982:BGP:357172.357176}\cite{EPRINT:LinLysRab04}. Such a protocol consists of the following algorithms:

\begin{enumerate}
	\item $(\param_{}) \leftarrow \mathsf{TPSetup}(1^{\secpar})$: A trusted party $\trustedparty{}$ generates the system paremeters $\param_{}$.
	\item $(\publickey{i}{},\secretkey{i}{}) \leftarrow \mathsf{PartyGen}(\param_{})$: Each party $i$ generates a public-private key pair.
	\item $\mathsf{TPMembers}(\widehat{[\publickey{}{}]}) := [\publickey{}{}]$: The $\trustedparty{}$ chooses the protocol participants from a list of public keys $\widehat{[\publickey{}{}]}$, and outputs a list of authenticated participant public keys $[\publickey{}{}]$.
	\item $\consensus ([[\publickey{j}{}]_{j=1}^{n},\secretkey{i}{},\nstate{i},\blockchain(i)]_{i=1}^{n}) := \blockchain'$: All system participants with state $\nstate{i}$ and a view of the blockchain $\blockchain(i)]_{i=1}^{n})$, agree on a new blockchain $\blockchain'$.
\end{enumerate}

\subsection{Byzantine Generals Problem}
One of the first studies attempting to achieve agreement on a distributed synchronous system was proposed by Leslie Lamport in 1982 \cite{Lamport:1982:BGP:357172.357176}. It proposed using either an ``Oral message" solution to achieve binary consensus among $n = 3f + 1$ nodes, where a leader was sending a proposed binary value in a fully connected network, then the honest nodes would propagate that value using the same protocol acting as leaders themselves. Using authentication further improves the resilience of the protocol to $n = 2f + 1$, as Byzantine faults can be tolerated under the assumption that dishonest parties cannot forge the leader's signature. In both cases however, the communication complexity is $O(n^{2})$, which is not considered scalable.

\subsection{PBFT}
The Practical Byzantine Fault Tolerance consensus protocol \cite{Castro:1999:PBF:296806.296824} was an exemplary protocol for many consensus protocols to follow. It assumes a partially asynchronous setting (i.e. offering \emph{eventual synchrony}), where a message is assumed to arrive to the destination node in the network after some unknown but bounded time $t$. In each round, a leader orders messages and propagates them to all nodes in the network in three phases. The leader is defined by a sequence of ``views", and the backup nodes can propose a leader (or view) change if he has faulty behavior (timeout exceeded). The protocol assumes $n = 3f + 1$ faulty nodes, and has communication complexity $O(n^{2})$, which again is in practice not scalable for more than 20 nodes \cite{DBLP:conf/ifip11-4/Vukolic15}. It also assumes static membership of participating nodes. 

\subsection{PBFT with dynamic clients}
Chondros et al. \cite{DBLP:conf/middleware/ChondrosKR12} suggest modifying the original PBFT protocol by adding ``Join" and ``Leave" system requests. The ``Join" request, equivalent to a ``client" request in PBFT, would trigger the execution of the original PBFT protocol between the primary and the replicas, and replying back with a challenge to prevent DoS attacks. The client wishing to join, replies with a message containing its credentials and a nonce, which are then associated in an array containing the protocol participants. The protocol is based on fully synchronous assumptions. While the protocol offers dynamic membership, it has high communication overhead as PBFT, therefore it is not a suitable candidate for our \sysname system.

\subsection{System membership tracking consensus}
Rodrigues et al. \cite{DBLP:journals/tdsc/RodriguesLCLS12} propose a ``Membership service" or a ``Configuration management service" for achieving dynamic membership in the permissioned consensus setting. Following a ``transaction endorsement - consensus nodes decoupling" paradigm, only a chosen subset of the participating nodes would execute a BFT-family protocol, while also being responsible for handling ``add" or ``remove" requests, pinging (or probing) for failed or crashed nodes and removing them after some time (or epochs), and keeping track of the epochs. Since these are only a small subset of the total nodes, the overall system remains scalable. Adding/removal is signed by a TP, which is compatible with our \sysname architecture.

\subsection{RSCoin}
Designed as a cryptocurrency framework, RSCoin~\cite{NDSS:DanMei16} deviates from the common practice of decentralizing monetary supply, which found in other cryptocurrencies. In RSCoin's architecture, a Trusted Party (the Bank) delegates its authority for validating transactions to known and semi-trusted ``mintettes", which in turn are each responsible for interacting with a subset of the cryptocurrency's addresses, forming ``shards". A basic consensus protocol based on Two-Phase Commit is executed between a group of mintettes and a client, which involves collecting signatures from the majority of mintettes responsible and then sending back the transaction to be included in a block. 
Also, since there is no direct communication between all mintettes, but they rather communicate indirectly through the clients, the communications complexity is very low which enables high scalability and performance. RSCoin could also potentially be used as a ``consensus'' replacement for \sysname, however it would require extensive modifications  because of its cryptocurrency-oriented architecture. 

\subsection{Hyperledger Frameworks}
In the following paragraphs we summarize the properties of additional Hyperledger frameworks \cite{Hyperledger-architecture-vol1}:
\begin{itemize}

	\item \textbf{Hyperledger Indy} uses the \textit{Redundant Byzantine Fault Tolerance (RBFT)} consensus algorithm \cite{6681599}, which is voting-based. As its name suggests, it is based on the PBFT consensus algorithm, modified to execute many protocol instances in parallel for improving performance in the case of a faulty leader. It provides Byzantine fault tolerance and reaches consensus very fast, however the time scales with the number of nodes ($O(n^{2})$ as in PBFT). An additional requirement is that the nodes must be totally connected.
	\item \textbf{Hyperledger Iroha} uses the \textit{Sumeragi} consensus algorithm, based on a reputation system. As with RBFT, it provides Byzantine fault tolerance reaching consensus in very short time, but that time scales with the number of nodes, and the nodes must be totally connected.
	\item \textbf{Hyperledger Sawtooth} uses the novel \textit{Proof of Elapsed Time (PoET)} consensus algorithm, which is lottery-based. It can be categorized in the ``Proof-of-X" consensus family, by replacing proof of computational work with a proof of elapsed time, using trusted hardware (Intel SGX enclave). Each protocol participant would request a wait time from their trusted hardware, and the shortest would be elected as the leader, by providing a proof that he indeed had the shortest wait time and the new block alongside the proof was not broadcasted until that time had expired \cite{DBLP:journals/corr/abs-1711-03936}. While the algorithm is scalable and Byzantine fault tolerant, there is a possibility of delayed consensus due to forks. Also this algorithm would not suit to our \sysname system, since we do not require our blockchain ``maintainers" to have trusted hardware capabilities.
\end{itemize}

\subsection{Additional Consensus properties}
\label{apdx:consensus-properties}
\begin{table*}[]
	\begin{tabular}{|p{27mm}|p{17mm}|p{17mm}|p{18mm}|p{17mm}|p{17mm}|p{30mm}|}
		\hline
		Algorithm & Adversarial model & Byzantine tolerant & Dynamic membership & Scalable & DoS resistant & Notes\\
		\hline
		PBFT & $3f+1$ & \cmark & \xmark & \xmark & Semi &  Hyperledger Fabric v0.6\\
		Kafka & $2f+1$ & \xmark & \cmark & \cmark & \xmark & Hyperledger Fabric v1.4 \\
		BFT-SMaRt & $3f+1$ & \cmark & \cmark & Semi &\cmark &  \\
		Nakamoto consensus & $2f+1$ & \cmark & Permissionless & \cmark & \cmark & \\
		\hline
	\end{tabular}
	\caption{Consensus algorithm comparison}
	\label{prelims:consensus-comparison}
\end{table*}

We outline the following additional consensus properties, which are desirable in our setting but not strictly required: 
\begin{enumerate}[label=\roman*.]
	\item Byzantine Fault Tolerant: As previously discussed, the crash tolerant model of consensus does not take  Byzantine behavior of nodes into account~\cite{DBLP:journals/corr/CachinV17}. Although our system considers the consensus algorithm running among a closed, controlled set of nodes, it might be depoyable in a more uncontrolled environment, where Byzantine behavior is possible (Byzantine consensus)\cite{Lamport:1982:BGP:357172.357176}.
	\item No synchronicity assumptions: Due to \cite{DBLP:journals/jacm/FischerLP85}, which excludes deterministic protocols from reaching consensus in a fully asynchronous system, to achieve synchronicity, we would have to use a randomized protocol, else we will have to assume ``eventual synchrony" (i.e. protocol finishes within a fixed but unknown time bound) \cite{CCS:MXCSS16}.
	\item Incentive-compatible: Protocol keeps nodes motivated to participate in the system and follow its rules \cite{DBLP:journals/corr/abs-1711-03936}.
	\item Minimal setup assumptions: No need for a trusted setup phase (as required by Authenticated Byzantine agreement discussed above).
	\item Weaker adversarial model: While most classical consensus protocols require a 33\% adversarial model (which we believe should be sufficient for our purposes), some protocols have a weaker requirement of 49\% adversarial power. However this usually comes at the cost of sacrificing Byzantine tolerance (as we discussed above) or scalability in terms of operations per second \cite{Nakamoto:bitcoin}. 
\end{enumerate}

\subsection{An instantiation for Consensus algorithm}
\label{constr-pluggedconsens}
In the generic construction of our scheme, we assumed a ``pluggable" consensus algorithm, decoupled from our construction, similar to the original Hyperledger architecture. Recall that this algorithm, which is executed among all orderers $\orderer{i}$, on input of a blockchain $\blockchain$ and some orderer state $\nstate{i}$, outputs an agreed new updated blockchain $\blockchain'$. Here we provide a concrete instantiation of a consensus algorithm for the modified Hyperledger used in \sysname that matches the PBFT consensus protocol~\cite{Castro:1999:PBF:296806.296824} as follows (note though that PBFT would not satisfy all of the required system properties as discussed in section \ref{prel:consensus}): 
\begin{enumerate}
	\item $\orderer{i}$ parses $\orderertxlist$ from its $\nstate{\orderer{i}}$ extracting a set of transactions $\{\tx_{i}\}$.
	\item $\orderer{i}$ based on the current $\blockchain$ and $\{\tx_{i}\}$ constructs a new block $\block{i}$ which would create $  \blockchain ||\block{i} \rightarrow \blockchain'$.
	\item $\orderer{i}$ computes $\sigma := \mathsf{Sign}(\secretkey{\orderer{i}}{} , \block{i})$. and sends $\sigma$ to all orderers in $\ordererset$ (equivalent to ``pre-prepare" phase in PBFT).
	\item All the other orderers $\orderer{x} \in \ordererset$ parse $(\ordererlist_{\blockchain})$ from the output of $\readconfig (\blockchain)$. Check if $\publickey{\orderer{i}}{} \in \ordererlist_{\blockchain}$  and $\mathsf{SVrfy}(\publickey{\orderer{i}}{},\sigma,\block{i})=1$. Then it verifies that the proposed block was formed correctly (i.e., it is a valid extension of the current blockchain $\blockchain$). If all verifications pass, it computes $\sigma_{x} := \mathsf{Sign}(\secretkey{\orderer{x}}{} , \block{i})$ and sends $\sigma_{x}$ to all orderers in $\ordererset$ (equivalent to ``prepare" phase in PBFT).
	\item Each $\orderer{x} \in \ordererset$ (including $\orderer{i}$) checks if $\mathsf{SVrfy}(\publickey{\orderer{x}}{},\sigma_{x},\block{i})=1$. If it collects sufficient number of signatures (specific to each consensus protocol) it computes  $\sigma_{x}' := \mathsf{Sign}(\secretkey{\orderer{x}}{} , \block{i},1)$
	and sends $\sigma_{x}'$ to all orderers in $\ordererset$ (equivalent to PBFT ``commit" phase ). 
	\item Each $\orderer{x} \in \ordererset$ (including $\orderer{i}$) checks if \\ $\mathsf{SVrfy}(\publickey{\orderer{x''}}{},\sigma_{x}',\block{i},1)=1$. If it receives sufficient number of signatures (specific to each consensus protocol) it updates it state to $\nstate{\orderer{x}}'$ and outputs ``1". It outputs ``0" in all other cases.	
\end{enumerate} 
The above instantiation satisfied the basic consensus properties in Definition \ref{prel:cons_def}.

\begin{table*}[t]\centering
	\caption{Hash-based schemes concrete comparison, 256-bit security}
	\label{ots_concrete_comparison}
	\begin{tabular}{|l|l|l|l|l|l|l|p{25mm}|}
		\hline
		Scheme     & Stateful & Public key (bytes) & Secret key (bytes) & Signature (bytes) & Sign (msec) & Verify (msec) & Remarks \\ \hline
		XMSS       &  Yes        &    68        &            & 4963          &610      &    160    &  Cortex M3 32MHz 32-bit 16KB RAM   \cite{EPRINT:KamFlu17,EPRINT:HulRauBuc17}    \\ \hline
		SPHINCS    & No         &1056            &1088            & 41000         &  18410  &  513 & Cortex M3 32MHz 32-bit 16KB RAM  \cite{PKC:HulRijSch16}   \\ \hline
		Our scheme &  Yes        &  32          &   32         &   32 (64)        &  52    &   0.035      & ATmega328P 16MHz 8-bit 2KB RAM       \\ \hline
	\end{tabular}
\end{table*}

\section{Construction algorithms}
\label{sec-model-defs}

For our construction we assume an existentially unforgeable signature scheme $(\mathsf{SignGen},\mathsf{Sign},\mathsf{SVrfy})$
and an unforgeable one-time chain based signature scheme as defined in Section \ref{prel:onetime-sigs}  $(\mathsf{OTKeyGen}$, $\mathsf{OTSign}$, $\mathsf{OTVerify})$. We also assume an authenticated blockchain consensus scheme $(\mathsf{TPSetup},  \mathsf{PartyGen}, \mathsf{TPMembers}, \consensus)$ satisfying the properties outlined in Section \ref{prel:consensus:fundam-properties}.

\begin{enumerate}
	\item $\configtxgen (1^{\secpar},\ladmlist, \ordererlist, \peerlist, \policy)$
	lets the $\msp$ to initialize the $\sysname$ system. The initialization is optionally based on predetermined initial system participants, where $\ladmlist,\ordererlist, \peerlist$ are lists containing public keys for Local administrators, orderers and peers respectively, as well as a preselected system policy $\policy$.
	\begin{enumerate}
		\item $\msp$ sets as $\sysparams$ the system parameters for the signature and the hash function,
		as well as the consensus algorithm by running $\tpsetup$.
		\item Computes a random key pair $\signgen{\allowbreak\publicmsp{}}{\allowbreak\secretmsp{}}$.
		\item Assembles and outputs the genesis block $\block{0}$ (serving as the initial configuration block) by copying $\publicmsp{}, \sysparams$ and $[\ladmlist, \ordererlist, \peerlist, \policy]$ from the algorithm inputs.
		\item Initializes empty lists in $\msp$ memory \\ $[\ladmlist_{\msp}, \ordererlist_{\msp}, \peerlist_{\msp},\policy,\mspopers]$ where $\mspopers$ denotes a pending revoke operation list.
		\item Copies $\policy$ to $\policy_{\msp}$.
	\end{enumerate}
	The genesis block $\block{0}$ (as the blockchain $\blockchain$ in general) is public, while the rest of the outputs remain private to  $\msp$. For the following algorithms and protocols we assume that the security parameter and the system parameters are a default input.
	
	\item $\updateconfig ( \blockchain, \secretmsp{}, \nstate{\msp})$ enables $\msp$ to read system configuration information from its memory that is pending to be updated, and construct a new configuration block to make the new system configuration readable and valid in the blockchain by all system participants.
	\begin{enumerate}
		\item $\msp$ parses $\nstate{\msp}$ as $[\ladmlist_{\msp}, \ordererlist_{\msp}, \peerlist_{\msp}, \policy_{\msp}]$.
		\item Assembles a configuration update transaction 
		$\tx_{u} = \allowbreak[\ladmlist_{\msp},\allowbreak \ordererlist_{\msp}, \peerlist_{\msp}, \policy_{\msp}]$.
		\item Parses $\readconfig (\blockchain)$ as $\peerlist_{\blockchain}$.
		\item Sends signed transaction $\sigma_{\msp}(\tx_{u})$ to all $\aggr{i}{} \in \peerlist_{\blockchain}$.
	\end{enumerate}
	Since $\policy$ does not apply to transactions signed by $\msp$, the configuration update transaction is promptly appended to $\blockchain$ by all aggregators, resulting in  public output $\blockchain'$.
	
	\item $\updatepolicy ( \nstate{\msp},\policy)$ enables $\msp$ to update system policy parameters. 
	On input of a new system policy $\policy$, $\msp$ copies it to $\nstate{\msp}[\policy]$, overwriting the previous policy. The algorithm outputs the new updated $\nstate{\msp}'$.
	
	\item $\readconfig (\blockchain)$ can be run by any system participant to recover the current system configuration.
	\begin{enumerate}
		\item Parses $\blockchain$ as a series of blocks $\block{i}$.
		\item From the set of blocks marked as ``configuration" blocks where $\block{i}[type="C"]$, selects the block $\block{c}$ with the greatest height $c$.
		\item  Parses and outputs $\block{c}$ as $([\ladmlist_{\blockchain}, \ordererlist_{\blockchain}, \peerlist_{\blockchain}],\policy_{\blockchain})$.
	\end{enumerate} 
	
	\item $\orderersetup ()$ is  run by an orderer $\orderer{i}$ initializing its credentials and state. It computes and outputs signing keys as $\signgen{\publickey{\orderer{i}}{}}{\secretkey{\orderer{i}}{}}$ and initializes an signed transaction list in memory $\nstate{\orderer{i}}[\orderertxlist]$.
	
	\item $\ordereradd \{ \orderer{i}(\publickey{\orderer{i}}{}, \secretkey{\orderer{i}}{}) \leftrightarrow \\ \msp(\publicmsp{},\secretmsp{},\nstate{\msp}[\ordererlist_{\msp}],\blockchain) \} $ is an interactive protocol between an orderer $\orderer{i}$ and the system $\msp$ in order to add that orderer in the system:
	\begin{enumerate}
		\item $\orderer{i}{}$ first creates a physical identity proof $\pi$, then submits $\pi$ and $\publickey{\orderer{i}}{}$ to $\msp$.
		\item $\msp$ verifies $\pi$. Then it parses $(\ordererlist_{\blockchain})$ from the output of $\readconfig (\blockchain)$. Check that $(\publickey{\orderer{i}}{} \notin \ordererlist_{\msp}) \land (\publickey{\orderer{i}}{} \notin \ordererlist_{\blockchain})$. If all verifications hold, add $\publickey{\orderer{i}}{}$ to its local orderer list $\ordererlist_{\msp}$ and return ``1" to $\orderer{i}{}$, else return ``0" with an error code.
	\end{enumerate}
	
	\item $\localadmsetup () $ is an algorithm run by a $\localadmin{}$ to initialize its credentials and state and create a new device group $\iotGroup{}$.
	A Local Administrator computes and outputs signing keys as $\signgen{\publickey{\localadmin{i}}{}}{\secretkey{\localadmin{i}}{}}$. 
	Allocates memory for storing group aggregators' and sensors' public keys as $\nstate{\localadmin{i}}[ \agglistladmin,\senslistladmin]$.
	
	\item $\localadmreg \{ \localadmin{i} (\publickey{\localadmin{}}{}, \secretkey{\localadmin{}}{}) \leftrightarrow \\
	\msp(\publicmsp{},\secretmsp{}, \nstate{\msp}[\ladmlist_{\msp}],\blockchain)  \}$ is an interactive protocol between a local group administrator $\localadmin{i}$ and  $\msp$ in order to add $\localadmin{i}$ in the system.
	\begin{enumerate}
		\item $\localadmin{i}$ creates a physical identity proof $\pi$, then submits $\pi$ and $\publickey{\localadmin{}}{}$ to $\msp$.
		\item $\msp$ verifies $\pi$. Then it parses $(\ladmlist_{\blockchain})$ from the output of $\readconfig (\blockchain)$. Check that $(\publickey{\localadmin{i}}{} \notin \ladmlist_{\blockchain}) \land (\publickey{\localadmin{i}}{} \notin \ladmlist_{\msp})$. If all verifications hold, add $\publickey{\localadmin{i}}{}$ to $\ladmlist_{\msp}$ in $\nstate{\msp}$ and return ``1" to $\localadmin{i}{}$, else return ``0" with an error code.
	\end{enumerate}
	
	\item $\aggsetup \{\localadmin{i} (\publickey{\localadmin{i}}{}, \secretkey{\localadmin{i}}{},
	\nstate{\localadmin{i}}) \leftrightarrow \aggr{i}{j}() \}$ is an interactive protocol between an $\localadmin{i}$ and an aggregator $\aggr{i}{j}$ wishing to join group $\iotGroup{i}$.
	\begin{enumerate}
		\item $\aggr{i}{j}$ computes signing keys as $\signgen{\publicagg{i}{j}}{\secretagg{i}{j}}$ and initializes pending and write transaction sets $\pset{i}\rightarrow \emptyset, \txset{i}\rightarrow \emptyset$  in its $\nstate{\aggr{i}{j}}$.
		\item $\aggr{i}{j}$ creates a physical identity proof $\pi$, then submits $\pi$ and $\publicagg{i}{j}$ to $\localadmin{i}$.
		\item $\localadmin{i}$ verifies $(\publicagg{i}{j} \notin \agglistladmin)$ and $\pi$. If these verifications hold, it invokes $\aggAdd(\publicagg{i}{j})$ with $\msp$. If $\msp$ outputs ``1", it adds $\publicagg{i}{j}$ to $\agglistladmin$, sends an updated copy of $\agglistladmin$ to all $\aggr{i}{j} \in \agglistladmin$ and  $\publickey{\localadmin{i}}{}
		$ to $\aggr{i}{j}$. In all other cases it returns ``0".
		\item $\aggr{i}{j}$ copies $\publickey{\localadmin{i}}{}
		$ in its memory in $\nstate{\aggr{i}{j}}$.
	\end{enumerate}
	
	\item $\aggAdd \{\localadmin{i} (\publickey{\localadmin{i}}{}, \secretkey{\localadmin{i}}{}, \publicagg{i}{j}) \leftrightarrow \\ \msp(\publicmsp{},\secretmsp{},\nstate{\msp}[\peerlist_{\msp}],\blockchain) \}$ is an interactive protocol between a local administrator $\localadmin{i}$ wishing to add an aggregator to the system and $\msp$.
	\begin{enumerate}
		\item $\localadmin{i}$ computes $\sign{\secretkey{\localadmin{i}}{}}{\publicagg{i}{j}}{\sigma}$. Send $\sigma$ to $\msp$.
		\item $\msp$ computes $\svrfy{\publickey{\localadmin{i}}{}}{\publicagg{i}{j}}{\sigma}$. Checks that $(\publickey{\localadmin{i}}{} \in \ladmlist_{\msp}) \land b \land (\publicagg{i}{j} \notin \peerlist_{\msp}) ==  1$. If the verification holds, it parses $(\peerlist_{\blockchain})$ from the output of $\readconfig (\blockchain)$.  Check that $(\publicagg{i}{j} \notin \peerlist_{\blockchain})$. If the verification holds, add $\publicagg{i}{j}$ to $\peerlist_{\msp}$ and returns ``1" to $\localadmin{i}$. It returns ``0" in all other cases.
	\end{enumerate}
	
	\item $\aggupd \{\localadmin{i}(\secretkey{\localadmin{}}{},
	\publicsens{}{}) \leftrightarrow \aggr{i}{j}(
	\nstate{\aggr{i}{j}}[\senslistagg]) \}$ is an interactive protocol between $\localadmin{i}$ and an aggregator $\aggr{i}{j}$, both belonging to Group $i$. It is used when $\localadmin{i}$ wants to add a sensor public key $\publicsens{}{}$ to $\aggr{i}{j}$ and update its sensor list $\senslistagg$.
	\begin{enumerate}
		\item $\localadmin{i}$ computes $\sigma := \mathsf{Sign}(\secretkey{\localadmin{i}}{},\publicsens{}{})$. Send $\sigma$ to $\aggr{i}{j}$.
		\item $\aggr{i}{j}$ computes $\svrfy{\publickey{\localadmin{i}}{}}{\publicsens{}{}}{\sigma}$. Checks that $(\publicsens{}{} \notin \nstate{\aggr{i}{j}}) \land b == 1$. If the verification holds, it adds $\publicsens{}{}$ to $\senslistagg$\footnote{$\localadmin{i}$ should run the protocol with every aggregator in the group, however we present this with one aggregator for simplicity.} and returns ``1" to $\localadmin{i}$. It returns ``0" in all other cases.
	\end{enumerate}
	
	\item  $\sensjoin \{\localadmin{i}(
	\publickey{\localadmin{i}}{},\secretkey{\localadmin{i}}{},\nstate{\localadmin{i}}[\senslistladmin]) \leftrightarrow \sens{i}{j}(n)  \}$ is an interactive protocol between $\localadmin{i}$ of Group $i$ and a sensor $\sens{i}{j}$ wishing to join the system.
	\begin{enumerate}
		\item $\sens{i}{j}$ using the one-time signature scheme described in Section \ref{prel:onetime-sigs}:
		\begin{enumerate}
			\item Samples $k \leftarrow (1^{\secpar})$.
			and stores it in $\nstate{\sens{i}{j}}$.
			\item Runs $\otkeygen{\publicsens{i}{j}}{\secretsens{i}{j}}{\ell = 1}{n}$\footnote{This step is typically computed by a powerful device.}
			\item Stores $\ell = 1$ to $\nstate{\sens{i}{j}}$ where $\ell$ denotes the current ``index" in the hash chain.
			\item Creates a physical identity proof $\pi$ 
			\item Sends $(\pi,\publicsens{i}{j})$ to $\localadmin{i}$
		\end{enumerate}
		\item $\localadmin{i}$ checks $\mathsf{Vrfy}(\pi) \land (\publicsens{i}{j} \notin \senslistladmin ) \\ \land \aggupd(\secretkey{\localadmin{i}}{},
		\publicsens{i}{j}) ==1$ $\forall \aggr{i}{j} \in \iotGroup{i}$. $\localadmin{i}$ 
		adds $\publicsens{i}{j}$ to $\senslistladmin$, else it outputs ``0".
	\end{enumerate}	
	
	\item  $\senssenddata \{\sens{i}{j}(\publicsens{i}{j},\secretsens{i}{j},
	m,\nstate{\sens{i}{j}}) \leftrightarrow \\ \aggrset{x}(
	\nstate{\aggr{}{}}[\senslistagg,\siglistagg,\txlist]) \}$ is a protocol between sensor $\sens{i}{j}$ broadcasting data and a subset of aggregators $\aggrset{x} \subseteq \{\aggrset{i}\}$ (where $\{\aggrset{i}\}$ is the aggregator set in $\iotGroup{i}$).
	\begin{enumerate}
		\item For sending data $m$, $\sens{i}{j}$ computes  $\otsign{\secretsens{}{}}{\secretsens{}{}}{\nstate{\sens{i}{j}}'}{\nstate{\sens{i}{j}}}{m}{\sigma}$
		
		\item $\sens{i}{j}$ broadcasts $\sigma$ to $\aggrset{x}$.
		\item $\aggr{k}{}$ runs $\otverify{\publicsens{i}{j}}{m}{\sigma}$.\footnote{To avoid redundancy, the protocol can be improved by deterministically defining a ``responsible" aggregator for each transaction as discussed previously in this section.} If $b == 1$ it runs $\aggagree$ with all other aggregators in the group. If no ``alarm'' message $m^{A}$ from some other aggregator is received within some time $\delta$, it adds $m,\sigma$ to $\pset{i}$. If at least one ``alarm'' message is received, it outputs $\bot$.
	\end{enumerate}
	
	\item $\aggagree \{ \aggr{k}{}(\secretagg{k}{},\agglistladmin,\publicsens{i}{j},m, \sigma) \leftrightarrow \aggrset{}([\secretagg{i}{},\publicsens{i}{j},m'])\}$ is a protocol between an aggregator in $\iotGroup{i}$ and all other aggregators in the group. The purpose of this protocol is to detect any MITM attacks, and verifies that no aggregator in the group has received any message $m',\sigma'$ from $\publicsens{i}{j}$ where $m \neq m'$\footnote{This protocol does not require that all other aggregators in the group are reachable, therefore it does not require a reply from all aggregators to complete.}.
	\begin{enumerate}
		\item $\aggr{k}{}$ for payload $\mu = (\publicsens{i}{j},m, \sigma)$ computes $s = \Sign(\secretagg{k}{},\mu)$ using an EU-CMA signature scheme and sends $s,\mu$ to all $\aggr{i}{} \in \aggrset{}$.
		\item Each $\aggr{i}{}$ checks if it received a message $m'$ with signature $\sigma$ from sensor $\publicsens{i}{j}$ where $m \neq m'$. If there's no such message, it outputs $\bot$. Else it sends an ``alarm'' message $m^{A}$ and respective signature $s$ to $\aggr{k}{}$ and keeps a record in its log.
	\end{enumerate}
	
	\item $\aggsendtx \{\aggrset{}([\publicagg{i}{},\secretagg{i}{},\nstate{\aggr{i}{}},\blockchain]) \leftrightarrow \ordererset(\publickey{\orderer{j}}{}, \secretkey{\orderer{j}}{},\nstate{j})  \}$ is an interactive protocol between all aggregators $\aggr{i}{} \in \aggrset{}$ and all orderers $\orderer{j} \in \ordererset$. It is initiated when an aggregator wishes to submit a transaction for validation in the system and eventually store it in the blockchain.
	\begin{enumerate}
		\item An $\aggr{i}{} \in \aggrset{}$ parses $\pset{i}$ in $\nstate{\aggr{i}{}}$ as a set of transactions $\{\tx \}$.
		\item $\aggr{i}{}$ samples a nonce $n \leftarrow (1^{\secpar})$ and appends it to $\{\tx \}$.
		\item $\aggr{i}{}$ computes $\sign{\secretagg{i}{}}{\{\tx \}}{\sigma}$. Send $\sigma$ to all other $\aggr{j}{} \in \aggrset{x}$.
		\item Each $\aggr{j}{}$, parses $(\peerlist_{\blockchain})$ from the output of $\readconfig (\blockchain)$. Computes $\svrfy{\publicagg{i}{}}{\{\tx \}}{\sigma}$. If $(\publicagg{i}{} \in \peerlist_{\blockchain}) \land b==1$, compute\\ $\sign{\secretagg{j}{}}{\{\tx \}}{\sigma_{j}}$. Send $\sigma_{j}$ to $\aggr{i}{}$.
		\item $\aggr{i}{}$ parses $\readconfig(\blockchain) \rightarrow \policy_{\blockchain} \rightarrow \tau$ where $\tau$ the minimum required number of signatures for a transaction to be submitted on the blockchain, as defined by policy $\policy$. 
		\item If $| \{ \sigma_{j} \} | > \tau$, select a reachable orderer $\orderer{}$, send $\{ \sigma_{j} \}$, copy $\{\tx \} \rightarrow \txset{}$ and set $\pset{} \rightarrow \emptyset$.
		\item The orderer $\orderer{}$ parses $\nstate{j} \rightarrow \orderertxlist$,\\ $\{\tx \} \rightarrow n$, $\readconfig(\blockchain) \rightarrow \peerlist_{\blockchain}, \policy_{\blockchain}$ and $\policy_{\blockchain} \rightarrow \tau$ then checks:
		\begin{enumerate}
			\item $| \{ \sigma_{j} \} | > \tau$
			and $n \notin \orderertxlist$
			\item Compute $ \svrfy{\publicagg{j}{}}{\{\tx \}}{\sigma_{j}}, \forall j$ then \\ $\prod b_{j} ==1$
			\item $\prod_{j}^{}(\{\publicagg{j}{}\}  \in \peerlist_{\blockchain}) ==1$
		\end{enumerate} 
		If the checks are valid, stores $| \{ \sigma_{j} \} |$ in its $\nstate{\orderer{i}}$ and replies ``1" to $\aggr{i}{}$ as a confirmation, else it replies ``0".
		\item If $\orderer{i}$ has created a new block containing ordered transactions, it runs $\consensus$ to update the blockchain. 
		\item If $\consensus$ succeeds, it runs $\updateBC$ with all aggregators to update to the new $\blockchain'$.
	\end{enumerate}
	
	\item $\consensus ([[\publickey{\orderer{j}}{}]_{j=1}^{n}, \secretkey{\orderer{i}}{},\nstate{i},\blockchain]_{i=1}^{n})  := \blockchain'$\\
	The exact protocol functionality is described in the system parameters $\sysparams$\footnote{This is equivalent to Hyperledger's ``pluggable" consensus, which is defined in the genesis block.} and follows the definition provided in Section \ref{prel:consensus}. In general this protocol is executed among all orderers $\orderer{i} \in \ordererset$ where they agree on a new updated blockchain $\blockchain'$. In Appendix \ref{constr-pluggedconsens} we provide a concrete instantiation of a consensus algorithm for our construction.
	
	\item  $\updateBC \{\orderer{i}(\publickey{\orderer{i}}{}, \secretkey{\orderer{i}}{} ,\nstate{\orderer{i}},\blockchain')\leftrightarrow\\
	\aggrset{}([\publicagg{x}{},\secretagg{x}{},\nstate{\aggr{x}{}},\blockchain]) \}$ is initiated by an orderer $\orderer{i}$ to append a new block in the blockchain.
	\begin{enumerate}
		\item $\orderer{i}$ parses its $\nstate{\orderer{i}}$ to retrieve the agreed blockchain update signature set $\{\sigma_{x}' \}$
		\item $\orderer{i}$	computes $\sign{\secretkey{\orderer{i}}{}}{(\block{i},\{\sigma_{x}' \})}{\sigma}$ where $\blockchain' := \blockchain||\block{i}$ and sends $\sigma$ to all $\aggr{x}{} \in \aggrset{}$.
		\item Each $\aggr{x}{}$ computes $\svrfy{\publickey{\orderer{i}}{}}{\block{i}||\{\sigma_{x}' \}}{\sigma}$ and checks if $b==1$. Then it parses $\sigma$ as a transaction set $\{\tx\}$ and removes these from $\txset{x}$. Then it updates $\blockchain$ to $\blockchain'$, else it outputs $\bot$.
	\end{enumerate}

	\item $\noderem(\publickey{i}{},\sigma,\nstate{\msp}, \blockchain) $ is initiated by $\msp$ to revoke credentials of any system participant.
	
	$\msp$ parses $\readconfig(\blockchain)$ as $[\ladmlist_{\blockchain}, \ordererlist_{\blockchain}, \peerlist_{\blockchain}]$. It verifies $\sigma$ (if the remove operation was initiated by a $\localadmin{}$) and checks if $\publickey{i}{}$ exists in $[\ladmlist_{\blockchain}, \ordererlist_{\blockchain}, \peerlist_{\blockchain}]$ or in its $[\ladmlist_{\msp}, \ordererlist_{\msp}, \peerlist_{\msp}] \in \nstate{\msp}$ in case participation privileges for $\publickey{i}{}$ have not yet been updated on the $\blockchain$ through $\updateconfig$. If it finds a match in the blockchain lists, it creates a remove operation $R := (\publickey{i}{}, ``rm")$ and adds $R$ to $\mspopers$, else if it finds a match in its state lists it removes it from the respective list, else it outputs $\bot$. If $\publickey{i}{} \in \peerlist_{\blockchain} \lor \publickey{i}{} \in \peerlist_{\msp}$, it also informs $\localadmin{i}$.

	\item $\grprem (\publickey{i}{},\nstate{\localadmin{}}[ \agglistladmin,\senslistladmin])$ is initiated by a Local Administrator to revoke credentials of an aggregator or sensor in its group. $\localadmin{}$ checks if $\publickey{i}{} \in [ \agglistladmin,\senslistladmin]$. If it finds a match and $\publickey{i}{} \in \agglistladmin$, it $\localadmin{i}$ computes $\sign{\secretkey{\localadmin{i}}{}}{\publicagg{i}{},"R"}{\sigma}$. Sends $(\sigma,\publicagg{i}{},"R")$ to $\msp$. On receiving successful removal from $\msp$ (after it invokes $\noderem$), $\localadmin{}$ removes $\publickey{i}{}$ from $\agglistladmin$. If $\publickey{i}{} \in \senslistladmin$, it invokes $\aggremsens$ with all $\aggr{}{} \in \agglistladmin$. After successful completion, it removes $\publickey{i}{}$ from $\senslistladmin$. 
	
	\item $\aggremsens \{\localadmin{i}(\secretkey{\localadmin{}}{},
	\publicsens{}{}) \leftrightarrow \aggr{i}{j}(
	\nstate{\aggr{i}{j}}[\senslistagg]) \}$ is initiated by a Local Administrator as a subroutine of $\grprem$ to revoke credentials of a sensor in its group.
	\begin{enumerate}
		\item $\localadmin{i}$ computes $\sigma := \mathsf{Sign}(\secretkey{\localadmin{i}}{},\publicsens{}{}``R")$. Send $\sigma$ to $\aggr{i}{j}$.
		\item $\aggr{i}{j}$ computes $\svrfy{\publickey{\localadmin{i}}{}}{\publicsens{}{}}{\sigma}$. Checks that $(\publicsens{}{} \notin \nstate{\aggr{i}{j}}) \land b == 1$. If the verification holds, it removes $\publicsens{}{}$ from $\senslistagg$ and returns ``1" to $\localadmin{i}$. It returns ``0" in all other cases.
	\end{enumerate}

\end{enumerate}

\section{Evaluation details}
\label{apdx:evaldetails}
Algorithms \ref{measurements-sensor-pseudocode} and \ref{measurements-aggregator-pseudocode} show the pseudocode for our evaluations on the sensor and aggregator side respectively. We denote by timer1 the signature computation time, by timer2 the verification time and by timer3 the total verification time, as previously shown in Table \ref{measurements-table}.

\else
\begin{acks}
	Foteini Baldimtsi and Panagiotis Chatzigiannis were supported by NSA 204761.
\end{acks}
\fi

\end{document}